

	\documentclass[11pt, reqno]{amsart} 


	\newcommand{\Number}{section} 

	\numberwithin{equation}{section} 

	\textwidth = 6.4 in
	\textheight = 9 in
	\oddsidemargin = -0 in
	\evensidemargin = -0 in
	\topmargin = -0.3 in
	\headheight = 0.0 in
	\headsep = 0.2 in
	\parskip = 0.1 in
	\parindent = 0.4in
	\marginparwidth=0.7in

	\usepackage{amssymb, amsfonts, verbatim} \usepackage{epsfig, amsmath, cancel} 
	\usepackage[mathscr]{eucal} \usepackage{graphicx} \usepackage{epsfig} 

		\newcommand{\Q}{\hspace{.065in}} 



			\newcommand{\del}[1]{\partial_{#1}}


		\newcommand{\ga}{\alpha}
		\newcommand{\gb}{\beta}
		
		\newcommand{\gd}{\delta}
		
		\newcommand{\gf}{\phi}
		
		\renewcommand{\gg}{\gamma}

		\newcommand{\gl}{\lambda}
		\newcommand{\gm}{\mu}
		
		\newcommand{\go}{\omega}
		
		\newcommand{\gq}{\theta}
		\newcommand{\gr}{\rho}
		
		\newcommand{\gs}{\sigma}
		\newcommand{\gt}{\tau}

		\newcommand{\gx}{\xi}
		\newcommand{\gy}{\psi}
		\newcommand{\gz}{\zeta}

			\newcommand{\gD}{\Delta}
			\newcommand{\gF}{\Phi}
			
			\newcommand{\gL}{\Lambda}

			\newcommand{\gS}{\Sigma}

			\newcommand{\gY}{\Psi}


			\newcommand{\scF}{\mathscr{F}}
			\newcommand{\scG}{\mathscr{G}}
			
			\newcommand{\scI}{\mathscr{I}}
			\newcommand{\scJ}{\mathscr{J}}
			\newcommand{\scK}{\mathscr{K}}

			\newcommand{\scY}{\mathscr{Y}}


			\newcommand{\bbN}{\mathbb{N}}

			\newcommand{\bbR}{\mathbb{R}}



		\theoremstyle{plain}
		
		\newtheorem{lemma}{Lemma}[\Number]
		\newtheorem{prop}{Proposition}[\Number]
		\newtheorem{theorem}{Theorem}[\Number]
		\newtheorem{corollary}[theorem]{Corollary}

			\theoremstyle{definition}

			\theoremstyle{remark}

	\newcommand{\mnote}[1]{}        
	
	
	\newcommand{\DNa}[2]{\underset{\hspace{-.5em}#1}{\nabla_{#2}}} 
	\newcommand{\Un}[1]{\underset{\hspace{0em}#1}{\nabla}} 
	\newcommand{\UL}[1]{\underset{\hspace{0em}#1}{\Delta}} 
	\newcommand{\Scal}[1]{\underset{\hspace{0em}#1}{R}} 
	\newcommand{\LW}[1]{\underset{\hspace{0.3em}#1}{LW}} 
	\newcommand{\LV}[1]{\underset{\hspace{0.3em}#1}{LV}} 
	
	\newcommand{\Ugf}[2]{\underset{\hspace{-0.1em}#1\phantom{#2}}{\gf^{#2}}} 
	\newcommand{\Dgf}[2]{\underset{\hspace{-0.75em}#1}{\gf_{#2}}} 
	\newcommand{\UDgf}[3]{\underset{\hspace{-0.75em}#1}{\gf^{#2}_{#3}}} 
	
	\newcommand{\Ugy}[2]{\underset{\hspace{-0.1em}#1\phantom{#2}}{\gy^{#2}}} 
		
	\newcommand{\UW}[1]{\underset{\hspace{-0.1em}#1}{W}} 
	\newcommand{\UV}[1]{\underset{\hspace{-0.1em}#1}{V}} 
	\newcommand{\UA}[2]{\underset{\hspace{-0.1em}#1\phantom{#2}}{A^{#2}}} 
	
	\newcommand{\Step}[2]{{\it \noindent {Step {#1}.  (#2)}}\\ \indent } 

\begin{document}

	\title[Near CMC]{Near-Constant Mean Curvature Solutions of the Einstein Constraint Equations with Non-Negative Yamabe Metrics} 
	
		 \author{Paul T Allen} 
	 \address{Albert Einstein Institute, Max Planck Institute for Gravitational Physics\\
	 D-14476 Golm\\
	Germany\\
	pallen@aei.mpg.de}
	
		\author{Adam Clausen}
	 \address{Department of Physics\\
	 Lawrence University\\
	 Appleton WI 54911, USA\\
	adam.clausen@lawrence.edu}

	\author{James Isenberg}
	\address{ Department of Mathematics\\ 
	University of Oregon\\
	Eugene, OR 97403, USA\\
	jim@newton.uoregon.edu}
	 
	 \date{\today}

	\begin{abstract} We show that sets of conformal data on closed manifolds with the metric in the positive or zero Yamabe class, and with the gradient of the mean curvature function sufficiently small, are mapped to solutions of the vacuum Einstein constraint equations.  This result extends previous work which required the conformal metric to be in the negative Yamabe class, and required the mean curvature function to be nonzero. \end{abstract}
	\maketitle


\section{Introduction}

The set of smooth, constant mean curvature (CMC) solutions of the vacuum Einstein constraint equations is fairly well understood.  For closed manifolds, there is a complete parameterization of these solutions in terms of conformal data \cite{I-CMC}.
For the asymptotically Euclidean and asymptotically hyperbolic cases, similar results hold \cite{CBIY-AE,AC-AH}.

Much less is know about non constant mean curvature solutions.  The mathematical reason for this is that while the CMC condition effectively eliminates three of the four Einstein constraint  equations from the analysis, in the non-CMC case one must handle the full, coupled system.

All of the non-CMC results to date \cite{IM-nonCMC,CBIY-AE,IP-AH} require that the gradient of the mean curvature $\gt$ be sufficiently small; we call such solutions ``near-CMC''.  In the case of closed manifolds, these results also require that the metric be in the negative Yamabe class, and that the mean curvature function have no zeroes.  While we have not yet managed to relax the small $|\nabla \gt|$ condition, in this paper we show that we can construct non-CMC solutions on closed manifolds with the metric in the positive or zero Yamabe class, and with the mean curvature function allowed to have zeroes in the positive Yamabe case.

The procedure we use for proving our results here is the semi-decoupled  sequence (constructive) method, which we have introduced in \cite{IM-nonCMC}.  The chief difference between our work here and \cite{IM-nonCMC} is that while we can use a sequence of constant sub and super solutions for sets of conformal data with negative Yamabe class metrics and $\gt$ nowhere zero, for solutions with positive or zero Yamabe class metrics we require non-constant sub solutions.  The focus in this paper is on how to obtain such sub solutions and how to control them.  We discuss this issue in Section \ref{S:Lich},
after a brief introduction to the conformal method in Section \ref{S:C-Method}.
Also in Section \ref{S:Lich}
we show that the Lichnerowicz  equation with negative Yamabe class metric and $\gt$ allowed to have zeroes (but not allowed to identically vanish) always admits solutions. \mnote{Do we really let $\gt$ have zeroes?}
In Section  4
we describe the semi-decoupling method for constructing near-CMC solutions of the constraints, and then state and prove our main theorems.  We make some concluding remarks in Section 5.
Note that, in this paper, we are not concerned with optimizing the regularity conditions on either the choice of conformal data or on the solutions of the constraints which we obtain. Presumably one could produce solutions with the same degree of roughness discussed in \cite{M-Rough} and \cite{CB-Constraints}.

\section{The Conformal Method and the Lichnerowicz Equation}\label{S:C-Method}
The Einstein vacuum constraint equations require that a set of initial data $(\gS; \gg, K)$ consisting of a Riemannian metric $\gg$ and a symmetric tensor $K$ specified on a three-dimensional manifold $\gS$, satisfy the equations
	\begin{equation}\label{FirstEin}
	{R} - K^{ab}K_{ab}+(K^a_{\Q a})^2 =0
	\end{equation}
and
	\begin{equation}\label{SecondEin}
	\DNa{}{a} K^{a}_{\Q b} - {\nabla_b} (K^a_{\Q a}) =0,
	\end{equation}
where the covariant derivative $\nabla$, the scalar curvature $R$, and all contractions and traces are calculated with respect to the metric $\gg$.

The idea of the conformal method is that one may construct and parameterize solutions of the constraints \eqref{FirstEin}-\eqref{SecondEin} by splitting $\gg$ and $K$ into a set of freely specified data and a set of determined data.  The freely specified ``conformal data'' consists of a Riemannian metric $\gl$, a symmetric tensor $\gs$ which is trace-free and divergence-free with respect to $\gl$, and a function $\gt$, all specified on a manifold $\gS$.  The determined data consists of a vector field $W$ and a positive definite scalar field $\gf$.  Using the conformal data to define the covariant derivative $\Un{\gl}$ together with the corresponding Laplacian $\UL{\gl}$, scalar curvature $\Scal{\gl}$, and conformal Killing operator 
	\begin{equation*}
	(\LW{\gl})_{ab} := \DNa{\gl}{a}W_b + \DNa{\gl}{b}W_a - \tfrac{2}{3} \gl_{ab} \DNa{\gl}{c} W^c,
	\end{equation*}
and to define contractions, we write out the constraint equations as follows:
	\begin{align}\label{Lich}
	\UL{\gl}\gf &= \tfrac18 \Scal{\gl}\gf - \tfrac18 | \gs + \LW{\gl}|^2 \gf^{-7} + \tfrac{1}{12} \gt^2 \gf^5 \\
	\Un{\gl}\cdot \LW{\gl} &= \frac23\gf^6 \, \Un{\gl}\gt. \label{LW}
	\end{align}

If, for a given set of conformal data $(\gS; \gl, \gs, \gt)$, equations \eqref{Lich}-\eqref{LW}
 can be solved for $(\gf, W)$, then one readily verifies that the reconstituted  data
	\begin{align}
	\label{gammarecon}
	\gg_{ab} &= \gf^4 \gl_{ab} \\
	K^{ab}&= \gf^{-10} (\gs + \LW{})^{ab} + \tfrac{2}{3} \gf^{-4} \gt\gl^{ab}
	\label{Krecon}
	\end{align}
satisfy the constraint equations.

Since $K^a_a = \gt$ and since $K_{ab}$ corresponds to the second fundamental form for the embedded Cauchy hypersurface $\gS$ in a spacetime development evolved from the initial data $(\gS; \gg, K)$, the function $\gt$ represents the mean curvature of the Cauchy surface.  Specifying $\gt = \text{constant}$ results in constant mean curvature (CMC) data.  This condition is important, since if we specify conformal data with constant mean curvature then the equations \eqref{Lich}-\eqref{LW} decouple.  Equation \eqref{LW} becomes a homogeneous linear elliptic equation for $W$ and we have $LW{}=0$ in all (compact) cases.  The determination of whether a particular set of CMC conformal data 
produces a solution of the constraints is thus determined entirely by the solubility of the non-linear elliptic ``Lichnerowicz'' equation \eqref{Lich}, with $LW{}=0$.

\section{Solving the Lichnerowicz Equation}\label{S:Lich}
In this section we discuss the solubility of the Lichnerowicz equation, independent of possible coupling to the other constraint equations.  (We return to the system \eqref{Lich}-\eqref{LW} in Section 4.)  To emphasize this, in \eqref{Lich} we replace the term $|\gs + LW|^2$  which involves the product of tensor fields, by the simple function $\gm^2$.  Thus we work with the Lichnerowicz equation in the form
	\begin{equation}\label{MuLich}
	\UL{\gl} \gf = \tfrac18 \Scal{\gl}\gf - \tfrac18 \gm^2 \gf^{-7} + \tfrac{1}{12}\gt^2 \gf^5.
	\end{equation}
Here $\gm$ and $\gt$ are arbitrary smooth functions, which may or may not have zeroes.

We may further simplify the analysis of the Lichnerowicz equation by  making use of its conformal covariance which tells us that there is a solution to \eqref{MuLich} for a given set of data $(\gS; \gl, \gm, \gt)$ if and only if there is a solution to \eqref{MuLich} for the related set of data $(\gS; \gq^4\gl, \gq^{-6}\gm, \gt)$ (see \cite{BI}).  Combining this property with the Yamabe theorem \cite{S-Scal} (see also \cite{LP-Yamabe}), we find that to determine the solubility of the Lichnerowicz equation for general sets of conformal data, it is sufficient to study \eqref{MuLich} for metrics having constant scalar curvature of either $+8$, $0$, or $-8$.\footnote{In fact, the full Yamabe Theorem is not needed; it suffices that each conformal class contain a metric with scalar curvature having definite sign; for a proof of this more elementary fact see \cite{A-NLbook}.}

The key tool we employ for proving the existence of solutions is the sub and super solution theorem.  The most useful version for our work here makes use of the Sobolev spaces $W^{2,p}$ and H\"older spaces $C^{k,\ga}$; see \cite{B} for definitions and properties of these function spaces.  For a proof of the theorem stated here, see \cite{IM-nonCMC}; the same result is proven for rougher data in \cite{M-Rough}.
	
\begin{theorem}\label{T:SubSuper}
Let $(\gS; \gl)$ be a closed Riemannian manifold with $C^2$ metric and let $f\in C^1(\gS\times\bbR_+)$.  Assume that there exists $\gf_-,\gf_+ :\gS\to \bbR_+$ such that with $p>3$ we have\footnote{We note that the inequalities stated in conditions (3) and (4) below involve Banach space elements, $\UL{\gl}\gf_-$ and $\UL{\gl}\gf_+$, and therefore are not strictly well-defined pointwise. These inequalities are presumed to hold  on any subset of the manifold $\gS$ with non-zero measure}
	\begin{enumerate}
	\item $\gf_{\pm} \in W^{2,p}(\gS)$,
	\item $0<\gf_-(x) \leq \gf_+(x)$ for all $x\in\gS$,
	\item $\UL{\gl}\gf_- \geq f(x,\gf_-)$, and
	\item $\UL{\gl}\gf_+ \leq f(x,\gf_+)$.
	\end{enumerate}
Then there exists $\gf:\gS\to\bbR_+$ such that
	\begin{enumerate}
	\item $\gf\in C^{2,\ga}(\gS)$ for $\ga\in(0,1-\tfrac{3}{p})$,
	\item $\gf_-(x) \leq \gf(x)\leq \gf_+(x)$ for all $x\in\gS$, and
	\item $\UL{\gl}\gf = f(x,\gf)$.
	\end{enumerate}
\end{theorem}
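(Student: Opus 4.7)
The plan is to establish the existence of $\gf$ by a monotone iteration (Perron-type) scheme, starting from the super-solution $\gf_+$ and producing a decreasing sequence of approximate solutions that remains bounded below by $\gf_-$. First I would exploit the assumption $f\in C^1(\gS\times \bbR_+)$ together with the a priori pointwise bounds $0<\min_\gS \gf_- \leq \gf_+ \leq \max_\gS \gf_+$ to pick a constant $K>0$ so large that the map
\[
u \longmapsto K u + f(x,u)
\]
is monotone nondecreasing in $u$ for every $x\in\gS$ and every $u$ in the interval $[\min_\gS \gf_-,\, \max_\gS \gf_+]$. The point is simply that $\partial_u f$ is continuous, hence bounded, on the compact set $\gS\times [\min \gf_-,\max \gf_+]$.

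With $K$ chosen, I would rewrite the equation $\UL{\gl}\gf = f(x,\gf)$ as $(\UL{\gl}-K)\gf = f(x,\gf) - K\gf$. Since $-K < 0$, the operator $\UL{\gl} - K : W^{2,p}(\gS) \to L^p(\gS)$ is an invertible linear isomorphism (standard elliptic theory on closed manifolds: the associated quadratic form is coercive, so Fredholm alternative applies with trivial kernel). I would then define the iteration $\gf_0 := \gf_+$, and for $n\geq 0$ let $\gf_{n+1} \in W^{2,p}(\gS)$ be the unique solution of the linear equation
\[
(\UL{\gl} - K)\gf_{n+1} = f(x,\gf_n) - K \gf_n.
\]

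The key claim to verify by induction is the chain of inequalities $\gf_- \leq \gf_{n+1} \leq \gf_n \leq \gf_+$ on $\gS$. This is where the monotone choice of $K$ and the weak maximum principle for $\UL{\gl}-K$ are used. For the step $\gf_1 \leq \gf_0 = \gf_+$, I subtract the defining equation for $\gf_1$ from the super-solution inequality to obtain $(\UL{\gl}-K)(\gf_+ - \gf_1) \leq 0$ in the a.e.\ sense supplied by the footnote; since $\UL{\gl}-K$ satisfies the maximum principle (negative zeroth-order term, no interior sign change possible at a negative minimum of $\gf_+ - \gf_1$), one concludes $\gf_1 \leq \gf_+$. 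A parallel argument using the sub-solution inequality gives $\gf_- \leq \gf_1$; the monotonicity of $Ku + f(x,u)$ in $u$ is what converts the ordering of the right-hand sides into an ordering that the maximum principle can exploit. The inductive step $\gf_- \leq \gf_{n+1} \leq \gf_n$ then follows identically. The main obstacle in the proof is this careful use of the maximum principle when the sub- and super-solutions are merely in $W^{2,p}$: one must apply a version valid for strong solutions (e.g.\ an Aleksandrov--Bakelman--Pucci or an integration-by-parts argument in $W^{2,p}$), which is permissible precisely because $p>3$ embeds $W^{2,p}$ into $C^{1,\ga}$ and because the inequalities are interpreted a.e., as spelled out in the footnote.

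Once monotonicity and two-sided bounds are in place, the sequence $\{\gf_n\}$ is pointwise decreasing and uniformly bounded in $L^\infty(\gS)$ between $\min \gf_-$ and $\max \gf_+$. Consequently $f(x,\gf_n) - K\gf_n$ is uniformly bounded in $L^\infty$ (hence in $L^p$), and the linear elliptic estimate for $(\UL{\gl}-K)^{-1}$ gives a uniform $W^{2,p}(\gS)$ bound on $\gf_n$. By Rellich--Kondrachov and the compact embedding $W^{2,p}\hookrightarrow C^{1,\ga}$ for $\ga<1-3/p$, a subsequence converges in $C^{1,\ga}$ to some $\gf$ with $\gf_- \leq \gf \leq \gf_+$; pointwise monotonicity upgrades this to convergence of the full sequence. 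Passing to the limit in the defining linear equation, using continuity of $f$ and weak continuity of $\UL{\gl}$, shows that $\gf \in W^{2,p}$ satisfies $\UL{\gl}\gf = f(x,\gf)$. Finally, since $\gf$ is strictly positive and $f(\cdot,\gf(\cdot))$ inherits the $C^{0,\ga}$ regularity of $\gf$ through the $C^1$ dependence of $f$ and the embedding $W^{2,p}\hookrightarrow C^{0,\ga}$, a standard Schauder bootstrap promotes $\gf$ to $C^{2,\ga}(\gS)$, completing the argument.
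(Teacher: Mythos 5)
The paper does not actually prove Theorem \ref{T:SubSuper}; it defers to \cite{IM-nonCMC} and \cite{M-Rough}, and the argument given there is precisely the monotone iteration you describe: shift by a constant $K$, invert $\UL{\gl}-K$, iterate downward from $\gf_+$, trap the iterates above $\gf_-$ with the maximum principle, and pass to the limit using uniform $W^{2,p}$ bounds and a Schauder bootstrap. So your route is the standard one and structurally correct. One sign needs fixing, and it is the hinge of the whole comparison argument: with the paper's conventions ($\UL{\gl}$ has non-positive spectrum, so that $(\UL{\gl}-K)v\leq 0$ with $K>0$ forces $v\geq 0$, and the \emph{sub}-solution satisfies $\UL{\gl}\gf_-\geq f(x,\gf_-)$), the right-hand side of your iteration is $f(x,u)-Ku$, and for the ordering of right-hand sides to convert into the ordering of iterates you need this map to be \emph{non-increasing} in $u$, i.e. $K\geq \sup \partial_u f$ on $\gS\times[\min_\gS\gf_-,\max_\gS\gf_+]$. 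Your stated condition, that $u\mapsto Ku+f(x,u)$ be nondecreasing, is $K\geq -\inf\partial_u f$, which is the opposite requirement and sends the comparison inequalities the wrong way. Taking $K\geq\sup|\partial_u f|$ over that compact set repairs this at no cost, and the remainder of your argument (a.e.\ interpretation of the $W^{2,p}$ differential inequalities for $p>3$, the strong-solution maximum principle, uniform elliptic estimates, Rellich compactness, and the bootstrap to $C^{2,\ga}$ with $\ga<1-\tfrac3p$) then goes through as in the cited references.
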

The functions $\gf_+$,$\gf_-$ are called super and sub solutions (resp.).  The bulk of the work required to obtain the results presented here lies in the construction of sub and super solutions for the Lichnerowicz equation \eqref{MuLich} by means of a technique which can be applied to the coupled system \eqref{Lich}-\eqref{LW}.  We first focus on \eqref{MuLich} for positive Yamabe metrics and show the following.

\begin{prop}\label{P:MuLichPlus}
Let $\gS$ be a closed manifold, let $\gl$ be a smooth, positive Yamabe class metric on $\gS$, and let $\gm$ and $\gt$ be smooth functions on $\gS$ with $\gm$ not identically zero.  Then there exists a unique smooth\footnote{The proof in \cite{I-CMC} produces a function $\gz_-\in W^{4,p}(\gS)$ for $p>3$; one readily bootstraps the argument to show that for smooth data, we obtain a smooth solution as well.}
 solution $\gf$ to the Lichnerowicz equation \eqref{MuLich}.
\end{prop}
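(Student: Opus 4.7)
The plan is to apply the sub and super solution theorem (Theorem~\ref{T:SubSuper}), so it suffices to exhibit $\gf_\pm$ with $0 < \gf_- \leq \gf_+$ satisfying the hypotheses there.  By the conformal covariance of \eqref{MuLich} (see \cite{BI}) and the Yamabe theorem, I replace $\gl$ by a conformally related metric of constant scalar curvature; without loss of generality $\Scal{\gl} \equiv +8$, and \eqref{MuLich} becomes
\[
\UL{\gl}\gf \;=\; \gf \;-\; \tfrac18 \gm^2 \gf^{-7} \;+\; \tfrac{1}{12}\gt^2 \gf^5.
\]
A super solution is immediate: any constant $\gf_+ = A$ with $A^8 \geq \tfrac18 \sup \gm^2$ works, since then $A^8 + \tfrac{1}{12}\gt^2 A^{12} \geq \tfrac18 \gm^2$ pointwise.

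The main obstacle is the sub solution: constants fail at every zero of $\gm$, where the required inequality collapses to $0 \geq c + \tfrac{1}{12}\gt^2 c^5$, which is impossible for $c > 0$.  I construct a non-constant $\gf_-$ as follows.  Let $Z = \{\gm = 0\}$, a proper closed subset of $\gS$ since $\gm \not\equiv 0$.  Pick $\gc \in C^\infty(\gS)$ with $\gc \geq 0$, $\gc \not\equiv 0$, and $\operatorname{supp}(\gc) \cap Z = \emptyset$, and fix a parameter $\gd > 0$.  On the closed manifold $\gS$ the operator $-\UL{\gl} + (1+\gd)$ has spectrum bounded below by $1+\gd$, hence is invertible and positivity-preserving, and the strong maximum principle produces a unique smooth $v > 0$ with $(-\UL{\gl} + (1+\gd))v = \gc$.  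Set $\gf_- := c v$ for a small constant $c > 0$ to be chosen.

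A direct substitution, using $\UL{\gl} v = (1+\gd)v - \gc$, reduces the sub-solution inequality to the pointwise bound
\[
c\gd v \;+\; \tfrac{1}{8 c^7}\gm^2 v^{-7} \;\geq\; c\gc \;+\; \tfrac{1}{12}\gt^2 c^5 v^5.
\]
I verify this in three regions.  On $Z$, where $\gm = 0 = \gc$, the inequality becomes $\gd \geq \tfrac{1}{12}\gt^2 c^4 v^4$, which holds uniformly for $c$ small since $\gt$ and $v$ are bounded.  On $\operatorname{supp}(\gc)$, $\gm^2$ has a positive lower bound, so the $c^{-7}\gm^2 v^{-7}$ term blows up as $c \to 0$ and dominates the bounded right-hand side.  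On the remainder $\gc$ vanishes, and $c\gd v \geq c\gd \inf v$ already dominates $\tfrac{1}{12}\gt^2 c^5 v^5$ for $c$ small, since $\inf v > 0$.  Shrinking $c$ further to also ensure $cv \leq A$, Theorem~\ref{T:SubSuper} produces a $W^{2,p}$ solution $\gf \geq \gf_- > 0$; standard elliptic bootstrap upgrades $\gf$ to smooth because the nonlinearity is smooth on $\gf > 0$.

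Uniqueness is by strict monotonicity of the right-hand side in $\gf$: with $f(x,\gf) = \tfrac{1}{8}\Scal{\gl}\gf - \tfrac{1}{8}\gm^2\gf^{-7} + \tfrac{1}{12}\gt^2\gf^5$ and $\Scal{\gl} = 8$, one has $\partial_\gf f = 1 + \tfrac{7}{8}\gm^2 \gf^{-8} + \tfrac{5}{12}\gt^2 \gf^4 > 0$.  Consequently the difference $w$ of any two positive solutions satisfies $\UL{\gl} w = h w$ with $h > 0$, and the maximum principle forces $w \equiv 0$.  The crucial step is the sub-solution construction; the device that makes it work is the gap $\gd > 0$, which gives the $c\gd v$ term just enough room to absorb the obstruction $\tfrac{1}{12}\gt^2 c^5 v^5$ at zeros of $\gm$ that defeats any constant sub solution.
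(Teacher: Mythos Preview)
Your proof is correct and shares the paper's overall architecture (reduce to $\Scal{\gl}=+8$, constant super solution, Theorem~\ref{T:SubSuper}, uniqueness via monotonicity of $\partial_\gf f$), but your sub-solution construction is genuinely different.  The paper builds its sub solution directly from $\gm$: it solves the linear equation $\UL{\gl}\gz_- - \gz_- = -\tfrac18\gm^2 A^{-7}$ with $A=\max\{1,\tfrac18\max_\gS\gm^2\}$, uses the maximum principle and the monotonicity of $s\mapsto\gm^2 s^{-7}$ to show $0<\gz_-\le A$ and hence that $\gz_-$ satisfies the sub-solution inequality for the $\gt=0$ equation, then subtracts $\tfrac12\min\gz_-$ to make that inequality strict and finally scales by a small $\gb\in(0,1)$ to absorb the $\tfrac{1}{12}\gt^2\gf^5$ term.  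You instead introduce an auxiliary bump $\gc$ supported away from $\{\gm=0\}$ together with a spectral gap $\gd>0$, take $v=(-\UL{\gl}+1+\gd)^{-1}\gc$, and scale by a small $c$; the $c\gd v$ term then plays the role that the paper's subtraction-and-scaling achieves in two steps.  Your route is arguably cleaner for the proposition in isolation.  The paper's construction, however, is tailored for what comes next: in Theorem~\ref{PosConstThm} one replaces $\gm^2$ by $|\gs+L\UW{n}|^2$ and needs the resulting sub solutions $\Ugy{n}{}$ to be uniformly bounded below in $n$, which the paper obtains by comparing $\Ugy{n}{}$ to the fixed reference $\gx$ through $(\UL{\gl}-1)(\Ugy{n}{}-\gx)=\ldots$ and controlling the right side by $|\nabla\gt|$.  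Your $v$ is built from a fixed $\gc$ rather than from $\gm$, so porting your argument to the coupled setting would require choosing $\gc$ supported where $|\gs+L\UW{n}|$ stays uniformly positive for all $n$; this can be arranged in the near-CMC regime (choose $\operatorname{supp}\gc\subset\{\gs\ne0\}$ and use smallness of $|L\UW{n}|$), but it is a step the paper's construction handles more transparently.
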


\begin{proof}
As noted above, it is sufficient to prove existence and uniqueness of solutions for which $\Scal{\gl}$ is constant; thus we work with the equation
	\begin{equation}\label{LichPlus}
	\UL{\gl} \gf = \gf - \tfrac18 \gm^2 \gf^{-7} + \tfrac{1}{12} \gt^2 \gf^5. 
	\end{equation}

\Step{1}{Sub solution for $\gt=0$ case} In \cite{I-CMC} (see class $(\scY^+,\gs\neq 0, \gt=0)$ in Section 5)  it is shown that there exists a smooth function $\gz_-$ such that
	\begin{equation}\label{E:Sub1}
	\UL{\gl}\gz_- \geq \gz_- - \tfrac18 \gm^2 \gz_-^{-7}. 
	\end{equation}
For completeness, we summarize the argument presented there: Let
	\begin{equation*}
	A:= \max{\{1, \tfrac18\max_{\gS}{\gm^2}\}} 
	\end{equation*}
and consider the linear PDE
	\begin{equation}\label{SubLin1}
	\UL{\gl} \gz_- - \gz_- = -\tfrac18\gm^2 A^{-7}. 
	\end{equation}
If follows from the non-degeneracy of the operator $(\UL{\gl} -1)$ on compact $\gS$ and from the smoothness of $\gl$ and $\gm$ that there exists a unique, smooth solution $\gz_-$ to \eqref{SubLin1}.  To show that this function satisfies the inequality \eqref{E:Sub1} as well, and therefore is a sub solution for the Lichnerowicz equation with $\gt=0$, we first note that  $\tfrac18 \gm^2 A^{-7}$ is non-negative and is not identically zero.  Thus the maximum principle guarantees that $\gz_- >0$.  Next, since the function $G(x,s):= \tfrac18\gm^2 s^{-7}$ is monotonically non-increasing in $s$, and since by definition $A \geq 1$ and $A \geq \tfrac18\gm^2$, we have $G(x,A) \leq G(x,1)$ and  $G(x,A) \leq A$ for all $x\in\gS$.  The latter inequality, together with the definition of $\gz_-$, guarantees that $\gz_-$ satisfies the inequality
	\begin{equation}
	\UL{\gl}\gz_- - \gz_- \geq -A, 
	\end{equation}
from which we infer (via the maximum principle)\footnote{The version we use here appears as \#3 in \cite{I-CMC}.} 
that $\gz_- \leq A$.
Using this last inequality together with the monotonicity (in $s$) of $G$ to infer that $G(x,A) \leq G(x,\gz_-)$, and writing \eqref{SubLin1} as 
	\begin{equation}
	 \UL{\gl} \gz_- = \gz_- -G(x,A),
	\end{equation}
we verify  that indeed $\gz_-$ satisfies the sub solution inequality \eqref{E:Sub1}.

\Step{2}{General sub solution}
The inequality \eqref{E:Sub1} is not strict; in order to obtain a sub solution for the Lichnerowicz equation \eqref{LichPlus} with positive Yamabe metric and $\gt^2$ not identically zero, it is  useful to first replace $\gz_-$ by a function $\gx_-$ for which \eqref{E:Sub1} is a strict  inequality.  This is easily done by setting
	\begin{equation}
	\gx_- := \gz_- - \tfrac12 \min_{\gS}{\gz_-}. 
	\end{equation}
Here we use the continuity of $\gz_-$ and the fact  that $\gz_- >0$ to verify that $ \min_{\gS}{\gz_-}$ exists and is positive.  Consequently we have that
\mnote{Find missing constants here}
	\begin{equation}
	 \gz_-(x) > \gx_-(x) \geq \tfrac12 (\min_{\gS}{\gz_-}) >0, \quad x\in \gS.
	\end{equation}
Using the monotonicity of $G(x,\cdot)$ we have $G(x, \gx_-) \geq G(x, \gz_-)$ for all $x\in \gS$, from which it follows that
	\begin{equation}
	\begin{aligned}
	\UL{\gl}\gx_- - \gx_- + \frac18\gm^2 \gx_{-}^{-7} &\geq \UL{\gl} \gz_- +\tfrac12 (\min_{\gS}{\gz_-}) - \gz_- + \frac18\gm^2 \gz_{-}^{-7}\\
	&\geq \tfrac12 (\min_{\gS}{\gz_-})\\
	&>0.
	\end{aligned} 
	\end{equation}
Let us now multiply $\gx_-$ by a positive number $\gb\in(0,1)$ (to be determined later); we obtain
\mnote{$\gm$ or $\gb$?}
	\begin{equation}
	\UL{\gl}(\gb\gx_-) -  (\gb\gx_-) + \frac{\gb^8}{8} \gm^2 (\gb\gx_-)^{-7} \geq \tfrac12 \gb (\min_{\gS}{\gz_-}).
	\end{equation}
Since $\gb\in(0,1)$, one has $\tfrac18 \gm^2 \geq \tfrac18 \gb^8 \gm^2$; hence 
	\begin{equation}
	\UL{\gl}(\gb\gx_-) -  (\gb\gx_-) + \tfrac{1}{8} \gm^2 (\gb\gx_-)^{-7} \geq \tfrac12 \gb (\min_{\gS}{\gz_-}). 
	\end{equation}
We wish to make a choice of the constant $\gb$ so that if we set $\gf_- = \gb\gx_-$, then $\gf_-$ is a sub solution  for \eqref{LichPlus}.  This is accomplished by choosing $\gb\in(0,1)$ so that
	\begin{equation}
	\tfrac12 \gb (\min_{\gS}{\gz_-}) \geq \tfrac{1}{12} \gt^2 (\gb\gx_-)^5 .
	\end{equation}
One readily verifies that this last estimate is satisfied provided
	\begin{equation}
	\gb \leq \left[\frac{6(\min_{\gS}{\gz_-})}{(\max_{\gS}{\gt^2})(\max_{\gS}{\gx_-})^5} \right]^{1/4} .
	\end{equation}
Making such a choice for $\gb$ we have that $\gf_- = \gb \gx_-$ is a sub solution for \eqref{LichPlus}; i.e.,
	\begin{equation}\label{LichPlusSub}
	\UL{\gl} \gf_- \geq \gf_- - \tfrac18\gm^2 \gf_-^{-7} + \tfrac{1}{12}\gt^2 \gf_-^5. 
	\end{equation}

For later purposes, we note here that while the sub solution construction just described has been carried out for conformal data with the metric in the positive Yamabe class, in fact the same construction produces a sub solution for the other Yamabe classes as well. Note that for the construction to work in the other Yamabe classes, one still uses equation \eqref{SubLin1} to construct $\gz_-$, rather than an alternative form with $- \gz_-$ replaced on the left hand side by $+\gz_-$ or by zero.  One obtains, for any Riemannian metric $\gl$, a function 	$\gf_-$ satisfying \eqref{LichPlusSub}; that $\gf_-$ is a sub solution for the same equation with an appropriate change of sign (according to the Yamabe class of $\gl$) for the linear $\gf_-$ term immediately follows.

\Step{3}{Super solution}
A constant $\gf_+$ is a super solution if it satisfies the inequality
	\begin{equation}\label{ConstSup}
	\gf_+ - \tfrac18 \gm^2 \gf_+^{-7} + \tfrac{1}{12} \gt^2\gf_+^5 \geq 0. 
	\end{equation}
Clearly if one chooses $\gf_+$ to be $(\tfrac18 \max_{\gS}{\gm^2})^{1/8}$, then the inequality above is satisfied.  However, this choice does not guarantee that we have $\gf_+ \geq \gf_-$.  To ensure this latter condition, we choose
	\begin{equation}
	\gf_+ = \max{\{ 1, \tfrac18 \max_{\gS}{\gm^2}   \}} .
	\end{equation}
Recalling  (as determined in the previous step) that $\gf_- \leq \gx_- < \gz_-\leq A$, we verify that $\gf_- \leq \gf_+$.  One also readily verifies that this choice satisfies \eqref{ConstSup} for any $\gm$ and $\gt$.

\Step{4}{Existence of Solution}
Since $\gf_-$ and $\gf_+$ together constitute a set of smooth sub and super solutions which satisfy the hypotheses of Theorem \ref{T:SubSuper}, it follows that \eqref{LichPlus} has a smooth solution which is pointwise bounded by $\gf_{\pm}$.

\Step{5}{Uniqueness of Solution}
The sub and super solution theorem can be used to guarantee that a solution exists, but tells us nothing about the uniqueness of that solution.  To show that solutions of the Lichnerowicz equation are unique, we rely on the following lemma, proved in \cite{I-CMC}.
\begin{lemma}
Let $f: \gS \times\bbR \to\bbR$ be $C^1$ and satisfy
	\begin{equation}
	\frac{\del{}f}{\del{}s} (x,s) \gneqq 0
	\end{equation}
for all $x\in\gS$ and all $s\in I$, where $I$ is some interval (possibly infinite) in $\bbR_+$.  If $\gY_i$, $i=1,2$, are both solutions of
	\begin{equation}
	\UL{}\gY = f(x, \gY(x) )
	\end{equation}
and if $\gY_i$ take values in $I$ for all $x\in\gS$, then $\gY_1(x) = \gY_2(x)$ for all $x\in\gS$.
\end{lemma}\label{L:Unique}
For \eqref{LichPlus} we have
	\begin{equation}
	f(x,s) = s - \tfrac18 \gm^2 s^{-7} + \tfrac{1}{12} \gt^2 s^5, 
	\end{equation}
and therefore
	\begin{equation}
	 \begin{aligned}
	 \frac{\del{}f}{\del{}s} (x,s) &= 1 + \tfrac78 \gm^2 s^{-8} + \tfrac{5}{12} \gt^2 s^4\\
	 &> 0.
	\end{aligned}
	\end{equation}
Uniqueness of solutions to \eqref{LichPlus} follows immediately.
\end{proof}

A result for metrics in the zero Yamabe class follows from the work done to prove Proposition \ref{P:MuLichPlus}.  In particular, we obtain the following.

\begin{prop}\label{P:MuLichZero}
Let $\gS$ be a closed $3$-manifold; let $\gl$ be a smooth, zero Yamabe class metric on $\gS$; and let $\gm$ and $\gt$ be smooth functions on $\gS$ with $\gt$ nowhere zero and $\gm$ not identically zero.  Then there exists a unique smooth
 solution $\gf$ to the Lichnerowicz equation \eqref{MuLich}.
\end{prop}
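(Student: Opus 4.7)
The plan is to adapt the four-step strategy used to prove Proposition~\ref{P:MuLichPlus} with modifications dictated by the change of Yamabe class. By conformal covariance combined with the Yamabe theorem, we may assume $\Scal{\gl}\equiv 0$, so that \eqref{MuLich} reduces to
\begin{equation*}
\UL{\gl}\gf \,=\, -\tfrac{1}{8}\gm^2 \gf^{-7} + \tfrac{1}{12}\gt^2 \gf^5.
\end{equation*}
The task is then to produce a matched pair of sub and super solutions so that Theorem~\ref{T:SubSuper} applies, with uniqueness handled separately by the monotonicity lemma from Step~5 of the previous proof.

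For the sub solution I would invoke directly the remark appended to Step~2 of the proof of Proposition~\ref{P:MuLichPlus}: the construction (solve the linear equation \eqref{SubLin1}, translate by $\tfrac{1}{2}\min_{\gS}\gz_-$, and scale by a sufficiently small $\gb\in(0,1)$) is insensitive to the sign of $\Scal{\gl}$, because the operator $\UL{\gl}-1$ is invertible on any closed manifold regardless of the scalar curvature of $\gl$. This produces $\gf_-\in C^\infty(\gS)$ with $0<\gf_-\leq A$ satisfying
\begin{equation*}
\UL{\gl}\gf_- \,\geq\, \gf_- - \tfrac{1}{8}\gm^2 \gf_-^{-7} + \tfrac{1}{12}\gt^2 \gf_-^5.
\end{equation*}
Since $\gf_-$ is strictly positive, dropping the $+\gf_-$ term on the right yields the required sub solution inequality for the $\Scal{\gl}=0$ equation above.

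The super solution is the step where the hypothesis that $\gt$ is nowhere zero enters, and it is the decisive one. A positive constant $\gf_+$ is a super solution if and only if $\gt^2 \gf_+^{12} \geq \tfrac{3}{2}\gm^2$ pointwise on $\gS$. Since $\min_{\gS}\gt^2 > 0$ by hypothesis, one may take
\begin{equation*}
\gf_+ \,=\, \max\Bigl\{A,\ \Bigl(\tfrac{3\max_{\gS}\gm^2}{2\min_{\gS}\gt^2}\Bigr)^{1/12}\Bigr\},
\end{equation*}
which automatically dominates $\gf_-$. In the positive Yamabe case a constant super solution was available essentially for free from the linear $+\gf_+$ term on the right-hand side; that slack is absent once $\Scal{\gl}=0$, and only a strictly positive lower bound on $\gt^2$ can absorb $-\tfrac{1}{8}\gm^2\gf_+^{-7}$ at points where $\gm\neq 0$. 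This is exactly why the hypothesis must be strengthened from ``$\gt$ not identically zero'' to ``$\gt$ nowhere zero''; without it, no constant super solution works, and a more delicate non-constant construction would be needed.

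Existence of a smooth $\gf$ with $\gf_-\leq\gf\leq\gf_+$ follows immediately from Theorem~\ref{T:SubSuper}. For uniqueness I would apply the monotonicity lemma from Step~5 of the previous proof to $f(x,s) = -\tfrac{1}{8}\gm^2 s^{-7} + \tfrac{1}{12}\gt^2 s^5$: a direct computation gives $\partial_s f = \tfrac{7}{8}\gm^2 s^{-8} + \tfrac{5}{12}\gt^2 s^4$, which is strictly positive on $\gS\times\bbR_+$ precisely because $\gt$ vanishes nowhere, so the lemma applies.
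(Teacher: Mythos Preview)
Your proof is correct and follows essentially the same approach as the paper's: reduce to $\Scal{\gl}=0$, reuse the sub solution $\gf_-$ from Proposition~\ref{P:MuLichPlus} (dropping the $+\gf_-$ term), build a constant super solution using $\min_{\gS}\gt^2>0$, and apply the monotonicity lemma for uniqueness. The only cosmetic difference is the specific formula for the constant super solution; your choice $\gf_+=\max\{A,(\tfrac{3\max_{\gS}\gm^2}{2\min_{\gS}\gt^2})^{1/12}\}$ works just as well as the paper's and the verification that $\gf_-\leq A\leq\gf_+$ is clean.
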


\begin{proof}
It is sufficient to prove existence and uniqueness of solutions when $\Scal{\gl}=0$; thus we work with the equation
	\begin{equation}\label{LichZero}
	\UL{\gl} \gf = - \tfrac18 \gm^2 \gf^{-7} + \tfrac{1}{12} \gt^2 \gf^5. 
	\end{equation}

\Step{1}{Sub solution}
For the given choice of data $\{ \gS; \gl,\gm,\gt\}$ with zero Yamabe class metric $\gl$, we seek a function $\gy_-$ which satisfies the inequality
	\begin{equation}\label{LichZeroSub}
	 \UL{\gl} \gy_- \geq - \tfrac18 \gm^2 \gy_-^{-7} + \tfrac{1}{12} \gt^2 \gy_-^5. 
	\end{equation}
We have shown in the proof of Proposition \ref{P:MuLichPlus} that there is a function $\gf_->0$ which satisfies	
	\begin{equation}\label{RecallPlus}
	  \UL{\gl} \gf_- \geq \gf_- - \tfrac18 \gm^2 \gf_-^{-7} + \tfrac{1}{12} \gt^2 \gf_-^5. 
	\end{equation}
Since, as noted above, the argument for the existence of $\gf_-$ does not depend on the Yamabe class of the metric, and since such a function also satisfies \eqref{LichZeroSub},  we may take $\gy_- = \gf_-$.

\Step{2}{Super solution}
Any constant $\gy_+$ which satisfies the condition 
	\begin{equation}
	\gt^2 \gy_+^5 \geq \tfrac32 \gm^2 \gy_+^{-7} 
	\end{equation}
serves as a super solution for \eqref{LichZero}.  
A short computation shows that any constant $\gy_+$ satisfying
	\begin{equation}
	 (\min_{\gS}{\gt^2})^{1/12} \gy_+ \geq \max{\{1, \max_{\gS}{\gm^2} \}}
	\end{equation}
will be a super solution for \eqref{LichZero}.
To ensure as well that $\gy_+ \geq \gy_-$, we choose
	\begin{equation}
	\gy_+ = \left(\max{\{2, (\min_{\gS}{\gt^2})^{-1/12} \}} \right) \left( \max{\{1,\max_{\gS}{\gm^2}  \}}\right).
	\end{equation}
Note that we now require $\gt$ to be nowhere vanishing, unlike for metrics in the positive and negative Yamabe classes.

\Step{3}{Existence of Solutions}
Since $\gy_{\pm}$ constitute  smooth sub and super solutions for \eqref{LichZero}, if follows that \eqref{LichZero} has a smooth solution which is pointwise bounded by $\gy_-$ and $\gy_+$.

\Step{4}{Uniqueness}
We apply Lemma \ref{L:Unique} to the function
	\begin{equation}
	f(x,s) = -\tfrac18 \gm(x)^2 s^{-7} + \tfrac{1}{12} \gt(x)^2 s^5.
	\end{equation}
Since $\gt^2 >0$ we immediately see that $ \del{}f/\del{}s >0$; hence the solution to \eqref{LichZero} obtained is indeed unique.
\end{proof}

We use Propositions \ref{P:MuLichPlus} and \ref{P:MuLichZero} as key results for proving that the conformal method maps certain sets of near-CMC conformal data to solutions of the constraint equations, as we show in the next section (see Theorems 4.1 and 4.3).

\section{Near-CMC solutions of the constraint equations}
The semi-decoupled sequence method for obtaining near-CMC solutions of the coupled system \eqref{Lich}-\eqref{LW}, introduced in \cite{IM-nonCMC}, focusses on the sequence of equations
	\begin{align}\label{LichIterate}
	\UL{}  \Ugf{n}{} &=  \tfrac18 R\Ugf{n}{} - \tfrac18 ( \gs^{ab} + \LW{n}^{ab}) ( \gs_{ab} + \LW{n}_{ab}) \Ugf{n}{-7}
		+ \tfrac{1}{12} \gt^2 \Ugf{n}{5} 
		\\
	\DNa{}{a} (\LW{n})^a_b &= \tfrac23\, \Ugf{n-1}{6} \,\DNa{}{b} \gt. \label{LWIterate}
	\end{align}
Let us presume that we have made a specific choice of conformal data $\{ \gS; \gl, \gs, \gt \}$ satisfying appropriate hypotheses; for convenience we have in these equations suppressed explicit reference to the conformal metric $\gl$.
The idea is to iteratively define a sequence $\{(\Ugf{n}{}, \UW{n} )\}$ satisfying \eqref{LichIterate}-\eqref{LWIterate} by first choosing $\Ugf{0}{}$ arbitrarily\footnote{We do require that this choice of  $\Ugf{0}{}$ satisfy the inequality $\Dgf{\infty}{-} \leq   \Ugf{0}{} \leq  \Dgf{\infty}{+}$ as discussed below.}, then solving 
\eqref{LWIterate} with $n=1$ to obtain $\UW{1}$, then substituting $\UW{1}$ into  \eqref{LichIterate} with $n=1$ and solving  \eqref{LichIterate} for $ \Ugf{1}{}$, and thus proceeding to solve \eqref{LWIterate} and \eqref{LichIterate} alternately and iteratively so as to obtain the entire sequence $\{(\Ugf{n}{}, \UW{n} )\}$. Once the sequence is obtained, one proceeds to prove that it converges to a smooth limit $(\Ugf{\infty}{}, \UW{\infty} )$ which satisfies \eqref{Lich}-\eqref{LW}.  One finally  shows that, for a given choice of conformal data (and for any choice of $\Ugf{0}{}$), the solutions obtained are unique.

Our method for  showing that this sequence exists involves obtaining a sequence of sub solutions $\Dgf{n}{-}$ and super solutions $\Dgf{n}{+}$, as discussed in the last section. Both to show that these sub and super solutions exist and are controlled, and also to prove convergence of the sequence, we seek uniform upper and lower bounds for the set of all sub and super solutions, which consequently uniformly bound the sequence $\Ugf{n}{}$ itself; these  in turn imply uniform estimates for each $\UW{n}$. Once we find these uniform bounds $\Dgf{\infty}{-}$ and $\Dgf{\infty}{+}$, we have at our disposal the estimates
\begin{equation}
	 0< \Dgf{\infty}{-} \leq  \Dgf{n}{-} \leq  \Ugf{n}{} \leq \Dgf{n}{+}\leq \Dgf{\infty}{+} < \infty
	\end{equation}	
which hold for all $n=1,2,3, \dots$. In fact, in our construction, we may use $\Dgf{n}{+} = \Dgf{\infty}{+}$ ( i.e., $\Dgf{\infty}{+}$ is a super solution for \eqref{LichIterate} for all $n$), while we inductively show the existence of sub solutions $\{\Dgf{n}{-}\}$, which are uniformly bounded below by a positive constant $\Dgf{\infty}{-}$.

Unlike the Lichnerowicz equation \eqref{MuLich}, the coupled system \eqref{Lich}-\eqref{LW} is \emph{not} conformally covariant (see \cite{IM-nonCMC}).  Consequently, the clearest statement of our results here regarding the solvability of \eqref{Lich}-\eqref{LW} for a given set of conformal data involve two steps: We first state and prove solvability for conformal data with constant positive  curvature (Theorem \ref{PosConstThm}) and then use that result to prove a corollary for data with any positive Yamabe class metric (Corollary \ref{CorPos}). Similarly, we prove a solvability theorem for data including zero curvature metrics (Theorem \ref{ZeroConstThm}) and then extend the results to data with any zero Yamabe class metric.

\begin{theorem}
\label{PosConstThm}
Let $\gS$ be a closed three-dimensional manifold, let $\gl$ be a smooth Riemannian metric on $\gS$  which admits no conformal Killing fields
and has constant positive scalar curvature $\Scal{\gl} = +8$, and let $\gs$ be a smooth symmetric $2$-tensor on $\gS$ which is trace-free and divergence-free (with respect to $\gl$) and not identically zero. 
For every smooth function $\gt:\gS \to \bbR_+$ which satisfies the gradient conditions given by  \eqref{gradtau1} and 
\eqref{TauBound} and which also satisfies the gradient condition that the coefficient of $ \left| \Ugf{n}{} - \Ugf{n-1}{} \right|$ in equation \eqref{gradtau2} is sufficiently small, 
the equations \eqref{Lich}-\eqref{LW} with data $\{ \gS; \gl, \gs, \gt \}$ admit a unique smooth solution $(\gf, W)$. Consequently for every such set of data $\{ \gS; \gl, \gs, \gt \}$, there exists a unique solution $(\Sigma; \gamma, K)$ of the constraint equations \eqref{FirstEin}-\eqref{SecondEin}, taking the form  \eqref{gammarecon}-\eqref{Krecon}.
\end{theorem}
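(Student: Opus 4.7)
The plan is to run the semi-decoupled iteration \eqref{LichIterate}-\eqref{LWIterate} of \cite{IM-nonCMC}: beginning from a constant seed $\Ugf{0}{}$ with $\Dgf{\infty}{-} \leq \Ugf{0}{} \leq \Dgf{\infty}{+}$, alternately solve \eqref{LWIterate} for $\UW{n}$ and apply Theorem \ref{T:SubSuper} to \eqref{LichIterate} to obtain $\Ugf{n}{}$, and then show the sequence converges in $C^{2,\ga}$ to a smooth solution of \eqref{Lich}-\eqref{LW}.  The three gradient-smallness hypotheses on $\gt$ enter at three distinct points: \eqref{gradtau1} is needed for the a priori bound on $\UW{n}$, \eqref{TauBound} is needed to obtain a uniform positive lower bound on the sub solutions, and the smallness of the coefficient in \eqref{gradtau2} drives the contraction that forces convergence.

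The first preparatory step is the a priori estimate on $\UW{n}$.  Since $\gl$ admits no conformal Killing fields, the vector Laplacian on the left-hand side of \eqref{LWIterate} is an isomorphism on $W^{2,p}$ for any $p>3$, so \eqref{LWIterate} combined with Sobolev embedding gives
\begin{equation*}
\|\LW{n}\|_{C^0} \;\leq\; C\,\|\Ugf{n-1}{}\|_\infty^6\,\|\Un{\gl}\gt\|_{L^p}.
\end{equation*}
A constant $\Dgf{\infty}{+}$ then serves as a uniform super solution for \eqref{LichIterate}: whenever $\Ugf{n-1}{} \leq \Dgf{\infty}{+}$, the previous bound controls $|\gs+\LW{n}|^2$, and the positive Yamabe structure yields
\begin{equation*}
\Dgf{\infty}{+} + \tfrac{1}{12}\gt^2(\Dgf{\infty}{+})^5 \;\geq\; \tfrac{1}{8}|\gs+\LW{n}|^2(\Dgf{\infty}{+})^{-7}
\end{equation*}
provided $\Dgf{\infty}{+}$ is sufficiently large and $\|\Un{\gl}\gt\|_{L^p}$ is sufficiently small, as guaranteed by \eqref{gradtau1}.

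The main obstacle is producing sub solutions $\Dgf{n}{-}$ that are uniformly positive in $n$.  For each $n$ I would run Step 2 of the proof of Proposition \ref{P:MuLichPlus} with $\gm$ replaced by $\gm^{(n)} := |\gs + \LW{n}|$: solve \eqref{SubLin1} with $A^{(n)} := \max\{1,\tfrac18\max(\gm^{(n)})^2\}$ to obtain $\gz_-^{(n)}$, shift by $\tfrac12\min\gz_-^{(n)}$ to obtain a strict sub solution $\gx_-^{(n)}$, and rescale by $\gb_n \in (0,1)$ to absorb the $\tfrac1{12}\gt^2\Ugf{}{5}$ term, setting $\Dgf{n}{-} := \gb_n \gx_-^{(n)}$, which satisfies \eqref{LichPlusSub}.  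To close the induction I need a uniform positive lower bound on $\min\gz_-^{(n)}$ and hence on $\gb_n$.  The uniform upper bound on $A^{(n)}$ (inherited from the bound on $\gm^{(n)}$) together with Harnack's inequality for $\UL{\gl}-1$ reduces matters to a uniform $L^1$ lower bound on $(\gm^{(n)})^2 = |\gs+\LW{n}|^2$; this in turn follows because $\gs\not\equiv 0$ is fixed, $\LW{n}$ is $O(\|\Un{\gl}\gt\|)$ by the estimate above, and the magnitude of $\|\Un{\gl}\gt\|$ is controlled by \eqref{TauBound}, ruling out near-cancellation of $\gs$ by $\LW{n}$.  Theorem \ref{T:SubSuper} then supplies $\Ugf{n}{}$ with $\Dgf{\infty}{-} \leq \Ugf{n}{} \leq \Dgf{\infty}{+}$, uniformly in $n$.

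With these uniform two-sided bounds on $\Ugf{n}{}$ and the $W^{2,p}$ bounds on $\UW{n}$, standard elliptic bootstrapping gives uniform $C^{2,\ga}$ bounds on the whole sequence.  To prove convergence I would subtract successive iterates and linearize: \eqref{LWIterate} for $\UW{n} - \UW{n-1}$ is driven by $(\Ugf{n-1}{})^6 - (\Ugf{n-2}{})^6$, while the difference of Lichnerowicz equations for $\Ugf{n}{} - \Ugf{n-1}{}$ is driven by $\LW{n} - \LW{n-1}$; combining these via Schauder and maximum-principle estimates yields a recursion
\begin{equation*}
\|\Ugf{n}{} - \Ugf{n-1}{}\|_\infty \;\leq\; \gk\,\|\Ugf{n-1}{} - \Ugf{n-2}{}\|_\infty
\end{equation*}
in which $\gk$ is precisely the coefficient appearing in \eqref{gradtau2}.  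The hypothesis that this coefficient is small makes the iteration a contraction in $C^0$, so $\Ugf{n}{} \to \Ugf{\infty}{}$ and correspondingly $\UW{n} \to \UW{\infty}$ in $C^{2,\ga}$.  Passing to the limit in \eqref{LichIterate}-\eqref{LWIterate} and bootstrapping regularity yields a smooth solution $(\Ugf{\infty}{},\UW{\infty})$ of \eqref{Lich}-\eqref{LW}; the same contraction argument applied to any two solutions lying in $[\Dgf{\infty}{-},\Dgf{\infty}{+}]$ delivers uniqueness; and the reconstitution formulas \eqref{gammarecon}-\eqref{Krecon} produce the desired $(\gS;\gg,K)$ solving \eqref{FirstEin}-\eqref{SecondEin}.
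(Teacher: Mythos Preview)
Your proposal is correct and follows the same semi-decoupled iteration as the paper, with the three gradient conditions entering at the same three junctures you identify and with the same contraction/uniqueness argument at the end. The one genuine technical divergence is in how the uniform lower bound $\Dgf{\infty}{-}$ on the sub solutions is obtained. The paper does not invoke Harnack: instead it fixes a reference function $\gx$ solving $(\UL{\gl}-1)\gx = -\tfrac18|\gs|^2 A^{-7}$ with $A = \max\{1,\max_{\gS}|\gs|^2\}$ (that is, the auxiliary linear problem with $\LW{n}$ set to zero), and then applies the maximum principle to the difference equation $(\UL{\gl}-1)(\Ugy{n}{}-\gx) = \tfrac18|\gs|^2 A^{-7} - \tfrac18|\gs+\LW{n}|^2 \UA{n}{-7}$ to show that $|\Ugy{n}{}-\gx|$ is small whenever $|\Un{\gl}\gt|$ is small (this is precisely where \eqref{TauBound} enters), giving $\Ugy{n}{}\geq\tfrac12\gd$ with $\gd := \min_{\gS}\gx > 0$; a somewhat awkward case split $A=1$ versus $A>1$ is needed to estimate the right-hand side. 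Your Green's-function/weak-Harnack route sidesteps that case analysis entirely. One remark that sharpens your argument: the ``ruling out near-cancellation'' step is in fact automatic, since $\gs$ (being transverse-traceless) is $L^2$-orthogonal to every $\LW{n}$, so $\int_{\gS}|\gs+\LW{n}|^2 \geq \int_{\gS}|\gs|^2 > 0$ uniformly in $n$ with no smallness of $\Un{\gl}\gt$ required at this particular point; only the uniform upper bound on $A^{(n)}$ (which follows from the $C^0$ bound on $\LW{n}$ already in hand) is needed to complete your lower-bound step.
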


\begin{proof}
For conformal data of the sort hypothesized here, the semi-decoupled system \eqref{LichIterate}-\eqref{LWIterate} takes the form
	\begin{align}\label{LichPlusIt}
	\UL{}  \Ugf{n}{} &=  \Ugf{n}{} - \tfrac18 | \gs + \LW{n}|^2 \Ugf{n}{-7} + \tfrac{1}{12} \gt^2 \Ugf{n}{5} 
		\\
	\DNa{}{} \!\cdot\! L \UW{n}&= \tfrac23\, \Ugf{n-1}{6} \,\DNa{}{} \gt.\label{LWPlusIt}
	\end{align}

\Step{1}{Construction of the sequence}
We begin by choosing $\Ugf{0}{}$ such that $ \Dgf{0}{-}\leq\Ugf{0}{}\leq \Dgf{0}{+}$, for some constants $\Dgf{0}{\pm}$ to be chosen later (See the paragraph just before Step 2.) and which depend only on the choice of conformal data.  As is evident below, the value of $\Ugf{0}{}$ is irrelevant, provided it does satisfy the above inequality.  
The operator $\nabla\cdot L$ is elliptic and self-adjoint with respect to appropriate Sobelev spaces; under our assumption that $(\gS; \gl)$ admits no conformal Killing vector fields, it is also invertible.  Thus by standard elliptic theory (See, for example, the appendix of Besse \cite{B}.), the equation \eqref{LWPlusIt} with $n=1$ admits a unique solution $\UW{1}{}$, which as a consequence of  the smoothness of $\gt$ satisfies
	\begin{equation}
	\label{ellestim}
	\|  \UW{1}{} \|_{C^{k+2,\ga}} \leq c\, \| \Ugf{0}{6} \,\nabla \gt \|_{C^{k,\ga}},\qquad k \geq 0,\quad  \ga\in(0,1),
	\end{equation}
where $C^{l, \ga}$ denotes the $(l,\ga)$ H\"older norm of vector fields on $\gS$ given by $\gl$.  Furthermore, as argued in \cite{IM-nonCMC}, it follows from \eqref{ellestim} along with geometric considerations that there exists $C_S$, depending only on the Riemannian manifold $(\gS, \gl)$,  such that we have the pointwise estimate
	\begin{equation}
	|L\UW{1}| \leq C_S (\max_{\gS}{\Dgf{0}{+}})^6(\max_{\gS}{|\nabla\gt|});
	\end{equation}
similarly we find 
	\begin{equation}
	\label{ptwise}
	|L\UW{n+1}| \leq C_S (\max_{\gS}{\Dgf{n}{+}})^6(\max_{\gS}{|\nabla\gt|}).
	\end{equation}
	
We now describe how to choose uniformly bounded sub and super solutions $\Dgf{n}{-}, \Dgf{n}{+}$ for \eqref{LichIterate}.  This allows us to inductively construct a bounded sequence
 $\{(\Dgf{n}{}, \UW{n})\}$ satisfying \eqref{LichPlusIt}-\eqref{LWPlusIt}.  
 To this end we assume the existence of $\Dgf{n-1}{}$ such that $0 < \Dgf{n-1}{} \leq \Dgf{\infty}{+}$, for some constant $ \Dgf{\infty}{+}$.
By our inductive assumption, it follows that
	\begin{equation}\label{UniformLW}
	|L\UW{n}| \leq C_S (\max_{\gS}{\Dgf{\infty}{+}})^6(\max_{\gS}{|\nabla\gt|}).
	\end{equation}

We desire to choose $\Dgf{\infty}{+}$ so that it is a constant super solution for \eqref{LichPlusIt} for all $n\in \bbN $; it suffices that the estimate
	\begin{equation}
	0 \leq  \Dgf{\infty}{+} - \tfrac18 | \gs + \LW{n} |^2 \UDgf{\infty}{-7}{+} + \tfrac{1}{12} \gt^2 \UDgf{\infty}{5}{+}
	\end{equation}
holds on $\gS$.  For this to hold it suffices that
	\begin{equation}
	\UDgf{\infty}{8}{+} + \tfrac{1}{12} \gt^2 \UDgf{\infty}{12}{+} \geq \tfrac14 ( |\gs|^2 + | \LW{n}|^2), 
	\end{equation}
which in turn holds provided
	\begin{equation}
	 \UDgf{\infty}{8}{+} \geq \tfrac14 |\gs|^2 \quad \text{ and } \quad 
	 \tfrac{1}{12} \gt^2 \UDgf{\infty}{12}{+} 
	 \geq \tfrac{1}{4} C_S^2 \UDgf{\infty}{12}{+}
	 	\max_{\gS}{|\nabla\gt|^2}.
	\end{equation}
We now see that so long as we restrict $\gt$ so that
	\begin{equation}\label{gradtau1}
	 {3} C_S^2  	\left(\frac{\max_{\gS}{|\nabla\gt|}}{\min_{\gS}{\gt}} \right)^2 < 1,
	\end{equation}
it suffices to choose $\Dgf{\infty}{+}$ such that
	\begin{equation}
	\UDgf{\infty}{8}{+} \geq \tfrac14  \max_{\gS}|\gs|^2. 
	\end{equation} 
	
We turn to the task of finding a sequence of sub solutions $\Dgf{n}{-}$, which are defined to be a sequence of functions such that
	\begin{equation}\label{YPlusSub}
	\UL{} \Dgf{n}{-} \geq \Dgf{n}{-} - \tfrac18 | \gs + \LW{n}|^2 \UDgf{n}{-7}{-} 
		+ \tfrac{1}{12} \gt^2 \UDgf{n}{5}{-},
	\end{equation}
where $\UW{n}$ satisfies \eqref{LWPlusIt}.  Here, we seek a positive constant $\Dgf{\infty}{-}$, independent of $n$, such that  $\Dgf{\infty}{+} \geq \Dgf{n}{-}\geq \Dgf{\infty}{-}>0$.  

Following the method used in the proof of Proposition \ref{P:MuLichPlus}, we first study the solution $\Ugy{n}{}$ to
	\begin{equation}
	\gD \Ugy{n}{} - \Ugy{n}{} = -\tfrac18 | \gs + \LW{n}|^2 \UA{n}{-7} ,
	\end{equation}
where $\UA{n}{} = \max{\{1, \tfrac18 \max_\Sigma { | \gs + \LW{n}|^2}\}}$.  We estimate $\Ugy{n}{}$ using $\gx$, the (smooth) solution to
	\begin{equation}
	\gD \gx - \gx = -\tfrac18|\gs|^2 A^{-7}, 
	\end{equation}
where $A = \max{\{ 1, \max_\Sigma{|\gs|^2}\}}$.  By the maximum principle\footnote{See, for example, version 2 in \cite{I-CMC}.} and the compactness of $\gS$, there exists a constant $\gd>0$, depending on $( \gS, \gl, \gs)$, such that $\gx \geq \gd$.

We claim that one may choose, depending only on $\Dgf{\infty}{+}$ (and hence on $\max_{\Sigma}{|\gs|^2}$), a constant $C_{\gt}$ such that the condition
	\begin{equation}\label{TauBound}
	|\nabla\gt| \leq C_{\gt} 
	\end{equation}
implies that $| \Ugy{n}{} - \gx |$ is small enough to ensure that $\Ugy{n}{} \geq \tfrac12 \gd$.  The claim follows from examining the equation
	\begin{equation}\label{SubDiff}
	 (\gD -1)(\Ugy{n}{} - \gx) =  -\tfrac18 | \gs + \LW{n}|^2 \UA{n}{-7} + \tfrac18|\gs|^2A^{-7}.
	\end{equation}
	
We first assume that $A=1$ and $\UA{n}{}=1$.  In this case, we may write the right side of \eqref{SubDiff} as $F(\LW{n}) - F(0)$, where $F(\gr) = -\tfrac18 |\gs + \gr|^2$.  Applying the mean value theorem, and making use of  \eqref{UniformLW}, we see that the right side of \eqref{SubDiff} is controlled by
	\begin{equation}
	\tfrac14 |\nabla\gt| \left( C_S^2 \,\UDgf{\infty}{12}{+}\max_{\gS}|\nabla\gt| +C_S\, \UDgf{\infty}{6}{+}\max_{\gS}{|\gs|}\right)\left(\max_{\gS}{|\gs|} \right).
	\end{equation}	
Thus by the maximum principle\footnote{Version 3 in \cite{IM-nonCMC}.}, we have $|\Ugy{n}{} - \gx |$ is small whenever $|\nabla\gt|$ is small.

In the case that $A >1$, we can choose $C_{\gt}$ small so that \eqref{TauBound} implies $\UA{n}{} >1$ for all $n$.  Then the right side of \eqref{SubDiff} is equal to
	\begin{equation}
	 -\frac18 \frac{ |\gs + \LW{n}|^2 - |\gs|^2}{\left(\max_{\gS}{|\gs + \LW{n}|^2}\right)^{7}}
	 -\frac18|\gs|^2 \left[\left(\max_{\gS}{|\gs + \LW{n}|^2}\right)^{-7} - \left(\max_{\gS}{|\gs|^2}\right)^{-7} \right].
	\end{equation}
Making use of \eqref{UniformLW} once again, we see that this quantity can be made small by controlling $|\nabla\gt|$.  Thus an application of the maximum principle yields the claim in this case.

With the claim in hand, one easily verifies that $-\gD\Ugy{n}{} + \Ugy{n}{} \leq \UA{n}{}$ and hence by the maximum principle we have $ \Ugy{n}{} \leq \UA{n}{}$.  From this it follows that
	\begin{equation}
	\gD \Ugy{n}{} \geq \Ugy{n}{} - \tfrac18 | \gs + \LW{n} |^2 \Ugy{n}{-7}. 
	\end{equation}
Replacing $\Ugy{n}{}$ by $\Ugy{n}{} - \tfrac14 \gd$, we see that the previous estimate holds with a strict inequality.

We now choose  a constant $\gb \in (0,1)$, independent of $n$, so that $\Dgf{n}{-} := \gb \Ugy{n}{}$ is a sub solution for \eqref{LichIterate}.  One verifies, using an argument similar to that in the proof of Proposition \ref{P:MuLichPlus}, that for any choice of $\gb\in(0,1)$ satisfying
	\begin{equation}
	\gb^5 \leq \frac{3\gd}{(\max_{\gS}{\gt^2}) \,\UDgf{\infty}{5}{+} },
	\end{equation}
we have
	\begin{equation}
	\gD(\gb \Ugy{n}{}) \geq (\gb\Ugy{n}{}) - \tfrac18 | \gs + \LW{n} |^2 (\gb\Ugy{n}{})^{-7} + \tfrac{1}{12} \gt^2 (\gb\Ugy{n}{})^{5}. 
	\end{equation}
Note that any such sub solution $\beta \Ugy{n}{}$ is  bounded below by $\Dgf{\infty}{-}:=\tfrac15 \gb \gd>0$, independently of $n$.  

Finally, if necessary, we choose a larger $\Dgf{\infty}{+}$ to ensure $\Dgf{\infty}{-} < \Dgf{\infty}{+}$.  This allows construction of the sequence $\{(\Ugf{n}{}, \UW{n} )\}$ such  that 
	\begin{equation}
	  0 < \Dgf{\infty}{-} \leq\Ugf{n}{}\leq \Dgf{\infty}{+}
	\end{equation}
and
	 \begin{equation}
	| \LW{n} | \leq C_S \,\UDgf{\infty}{6}{+} \max_{\gS}{|\nabla\gt|}.
	\end{equation}
We subsequently set $\Dgf{0}{-} = \Dgf{\infty}{-}$ and $\Dgf{0}{+} = \Dgf{\infty}{+}$.

\Step{2}{Convergence of the sequence}
We now show that the sequence $\{ (\Ugf{n}{}, \UW{n} )\}$ converges to a smooth limit $(\Ugf{\infty}{}, \UW{\infty})$.  Standard elliptic estimates applied to
	\begin{equation}
	(\nabla\! \cdot \! L)(\UW{n} - \UW{m}) = \tfrac23 [ \Ugf{n}{6} - \Ugf{m}{6} ] \nabla \gt
	\end{equation}
imply that $\{ \UW{n}\}$ is Cauchy in $W^{2,p}(\gS)$, with $p>3$, provided $\{\Ugf{n}{}\}$ is Cauchy in $C^0(\gS)$.  In light of the Sobolev embedding $W^{2,p}(\gS)\subset C^0(\gS)$ (See, for example, the appendix of Besse \cite{B}.), the sequence $\{ (\Ugf{n}{}, \UW{n} )\}$ converges to $(\Ugf{\infty}{}, \UW{\infty}) \in C^0(\gS) \times C^0(\gS)$ provided $\{\Ugf{n}{}\}$ converges.  \mnote{Return to this once the analysis section is written}
Thus we turn our attention to this sequence.

We study this sequence by considering the  quantity
	\begin{equation}
	\label{scrI}
	\scI(x,\Ugf{n-1}{}, \Ugf{n}{}, \Ugf{n+1}{}) := \int_0^1 \frac{d}{dt} \left[ \gD \Ugy{n+1}{}(t,x) - F(x,\Ugy{n}{}(t,x), \Ugy{n+1}{}(t,x)) \right] \,dt,
	\end{equation}
where $\Ugy{n}{}(t,x) = t \Ugf{n}{}(x) + (1-t)\Ugf{n-1}{}(x)$ and where 
	\begin{equation}
	\label{F}
	F (x,\Ugy{n}{}(t), \Ugy{n+1}{}(t)): = \tfrac18 \Ugy{n+1}{} - \tfrac18 |\gs +  \LV{n}|^2 \Ugy{n+1}{-7} + \tfrac{1}{12}\gt^2 \Ugy{n+1}{5}. 
	\end{equation}
Here the vector field $\UV{n}$ satisfies 
	\begin{equation}
	\nabla\!\cdot\! L \UV{n} = \tfrac23 \Ugy{n}{6}\nabla\gt
	\end{equation}
and we have suppressed dependence on the point $x\in \gS$.  Computing the quantity $\scI$ via the Fundamental Theorem of Calculus and also by direct computation, we obtain
	\begin{equation}
	\gD(\Ugf{n+1}{} - \Ugf{n}{}) - \scG [\Ugf{n+1}{} - \Ugf{n}{}]
		= \scF[\Ugf{n}{} - \Ugf{n-1}{}],
	\end{equation}
where
	\begin{equation}
	\label{FG}
	\begin{aligned}
	\scF[\Ugf{n}{} - \Ugf{n-1}{}] &= \int_0^1D_2 F(\cdot,\Ugy{n}{}(t), \Ugy{n+1}{}(t))\,dt\,[\Ugf{n}{} - \Ugf{n-1}{}] \\
	\scG[\Ugf{n+1}{} - \Ugf{n}{}] &=  \int_0^1D_3 F(\cdot,\Ugy{n}{}(t), \Ugy{n+1}{}(t))\,dt\,[\Ugf{n+1}{} - \Ugf{n}{}];
	\end{aligned} 
	\end{equation}
here $D_i$ is differentiation with respect to the $i^{\text{th}}$ variable.  One easily sees that $\scG$ satisfies
	\begin{equation}
	 \scG [\Ugf{n+1}{} - \Ugf{n}{}] \geq \tfrac18 (\Ugf{n+1}{} - \Ugf{n}{}).
	\end{equation}
An estimate for $\scF$ can be obtained by observing that
	\begin{equation}
	 \scF[\Ugf{n}{} - \Ugf{n-1}{}] =\tfrac14 \int_0^1 \left( \gs^{ab} + LV[\Ugy{n}{}(t)]^{ab}\right)
	 	\left(L\go(t)[\Ugf{n}{} - \Ugf{n-1}{}]_{ab}\right) \Ugy{n+1}{-7}(t) \,dt,
	\end{equation}
where the vector field $\go[\Ugf{n}{} - \Ugf{n-1}{}]$ is defined to be the solution to
	\begin{equation}
	\nabla\!\cdot\! L \go[\Ugf{n}{} - \Ugf{n-1}{}] =  4 \Ugy{n}{5} \nabla\gt (\Ugf{n}{} - \Ugf{n-1}{}).
	\end{equation}
Thus by standard elliptic estimates used above, we see that
	\begin{equation}
	\label{gradtau2}
	\left|  \scF[\Ugf{n}{} - \Ugf{n-1}{}]  \right| \leq
		\tfrac14 \left(\max_{\gS}{|\gs|} + \hat{C} \tfrac43 \UDgf{\infty}{6}{+}\max_{\gS}{|\nabla\gt|} \right)  
		\left(8 \hat{C} \UDgf{\infty}{5}{+}\max_{\gS}{|\nabla\gt|} \right)
		\left(\UDgf{\infty}{-7}{+}\right) 
		\cdot  \left| \Ugf{n}{} - \Ugf{n-1}{} \right| .
	\end{equation}
Hence, provided $\max_{\gS}{|\nabla\gt|}$ is small, we obtain via the maximum principle that
	\begin{equation}
	\label{contraction}
	 \left| \Ugf{n+1}{} - \Ugf{n}{} \right|\leq \gL \left| \Ugf{n}{} - \Ugf{n-1}{} \right|
	\end{equation}
for some positive constant $ \gL <1$.  It follows that  $ \{ \Ugf{n}{} \}$ is a Cauchy sequence in $C^0(\gS)$, and converges to $\Ugf{\infty}{}\in C^0(\gS)$. Further, as a consequence of  the argument described above, there is a $C^0$ limit $\UW{\infty}$ for the sequence $\{ \UW{n}\}$.  We may then adapt an argument from  \cite{IM-nonCMC}, together with the smoothness of the data, to verify  that in fact  $\Ugf{\infty}{}$ and $\UW{\infty}$ are smooth.

\Step{3}{Showing that the limit is the unique solution}
To see that $(\Ugf{\infty}{}, \UW{\infty})$ constitutes a solution to \eqref{Lich}-\eqref{LW}, it suffices to observe that for $p>3$ as above, we have the estimate
	\begin{equation}
	\left\| \gD \Ugf{n}{} - F(\cdot, \Ugf{n}{}, \Ugf{n}{}) \right\|_{W^{0,p}} \leq C_{\text{Sobolev}} 
		\left\| F(\cdot, \Ugf{n-1}{}, \Ugf{n}{}) - F(\cdot, \Ugf{n}{}, \Ugf{n}{}) \right\|_{C^{0}}.
	\end{equation}
The continuity of $F$ implies that the right side tends to zero as $n \to \infty$; thus we see that $\Ugf{\infty}{}$ is a weak solution to \eqref{Lich}.  The smoothness of $\Ugf{\infty}{}$ implies that this weak solution is in fact a classical solution; from a similar argument we obtain that \eqref{LW} is  weakly (and therefore classically) satisfied.

To verify uniqueness, let $(\gf, W)$ and $(\widehat{\gf}, \widehat{W})$ be a pair solutions.  Define $\gF(t) = t \gf + (1-t)\widehat{\gf}$ and consider the quantity
	\begin{equation}
	\scK (x, \gf, \widehat{\gf} )= \int_0^1 \frac{d}{dt} \left[ \gD \gF - F(x, \gF, \gF ) \right] \,dt.
	\end{equation}
The analysis used to show convergence of $(\Ugf{n}{}, \UW{n})$ yields the estimate
	\begin{equation}
	 \| \gf - \widehat{\gf} \|_{C^0} \leq \gL 	\| \gf - \widehat{\gf} \|_{C^0},
	\end{equation}
where, as before, $\gL <1$.  Thus $\gf = \widehat{\gf}$, from which it follows immediately that $W = \widehat{W}$.
\end{proof}

It is possible to use this theorem, in conjunction with the Yamabe theorem, to find a solution to the constraint equations in the conformal class of any Yamabe positive metric $\gl$ on $\gS$.  Given a Yamabe positive Riemannian manifold $(\gS, \gl)$, by the Yamabe theorem we can find a smooth, positive function $\gq$ such that $(\gS, \gq^4\gl)$ has scalar curvature $R =8$.  Then by applying  Theorem \ref{PosConstThm} to the system
	\begin{align}\label{YamabeLich}
	\UL{\gq^4\gl}\gf &= \tfrac18 \Scal{\gq^4\gl}\gf - \tfrac18 ( \gs + \LW{\gq^4\gl})^2 \gf^{-7} + \tfrac{1}{12} \gt^2 \gf^5 \\
	\Un{\gq^4\gl}\cdot \LW{\gq^4\gl} &= \gf^6 \, \Un{\gq^4\gl}\gt, \label{YamabeLW}
	\end{align}
we arrive at the following corollary.
\begin{corollary}
\label{CorPos}  
Suppose that $\gS$ is a three-dimensional Riemannian manifold with smooth metric $\gl$ in the Yamabe positive class and having no conformal Killing vector fields.  For each smooth symmetric $2$-tensor $\gs$  which is trace-free and divergence-free with respect to $\gl$, and for each smooth function $\gt: \gS \to \bbR$ which is non zero and which satisfies the hypotheses of 
Theorem \ref{PosConstThm}, there  exist  smooth positive functions $\gf$ and $\gq$ and a smooth vector field $W$ such that the data
	\begin{equation}
	\begin{aligned}
	\gg_{ab} &= (\gf\gq)^4 \gl_{ab} \\
	K^{ab} &= \gf^{-10} \left( \gq^{-10}\gs + \LW{\gq^4\gl} \right)^{ab} + \tfrac13 (\gf\gq)^{-4}\gl^{ab}\gt,
	\end{aligned} 
	\end{equation}
comprise a solution to the Einstein constraint equations.
\end{corollary}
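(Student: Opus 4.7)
The plan is to apply the Yamabe theorem to reduce the problem to the constant positive scalar curvature case handled by Theorem \ref{PosConstThm}. Since $(\gS, \gl)$ is in the positive Yamabe class, the Yamabe theorem guarantees a smooth positive function $\gq$ on $\gS$ such that $\Scal{\gq^4\gl} = +8$. The strategy is to solve the rescaled system \eqref{YamabeLich}-\eqref{YamabeLW} via Theorem \ref{PosConstThm} applied to the conformal data $(\gS; \gq^4\gl, \gq^{-10}\gs, \gt)$, and then to reconstitute the physical initial data.

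First I would verify that this rescaled data satisfies the hypotheses of Theorem \ref{PosConstThm}. By construction $\gq^4\gl$ has constant scalar curvature $+8$. The space of conformal Killing vector fields depends only on the conformal class, so $\gq^4\gl$ admits none. A standard conformal-weight calculation, based on the transformation law of the Levi-Civita connection under $\gl \mapsto \gq^4\gl$, shows that the exponent $-10$ on the contravariant components is the unique choice for which the trace-free and divergence-free properties of $\gs$ with respect to $\gl$ descend to the analogous properties of $\gq^{-10}\gs$ with respect to $\gq^4\gl$. Finally, the hypotheses on $\gt$ are assumed to hold directly.

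With the hypotheses verified, Theorem \ref{PosConstThm} produces a unique smooth solution $(\gf, W)$ to the system \eqref{YamabeLich}-\eqref{YamabeLW}. Applying the reconstitution formulas \eqref{gammarecon}-\eqref{Krecon} to the metric $\gq^4\gl$ and the solution $(\gf, W)$ then yields
\begin{equation*}
\gg_{ab} = \gf^4(\gq^4\gl_{ab}) = (\gf\gq)^4\gl_{ab}, \qquad K^{ab} = \gf^{-10}\bigl(\gq^{-10}\gs + \LW{\gq^4\gl}\bigr)^{ab} + \tfrac{2}{3}\gf^{-4}\gt(\gq^4\gl)^{ab},
\end{equation*}
and the second term simplifies to $\tfrac{2}{3}(\gf\gq)^{-4}\gt\,\gl^{ab}$, matching the form stated in the corollary. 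Smoothness and positivity of $\gf\gq$ follow from those of $\gf$ and $\gq$ individually, and $(\gg, K)$ satisfies the vacuum constraint equations by the conformal method.

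The main obstacle is the conformal-covariance verification for $\gs$, which is what makes the whole reduction go through. As remarked just before the statement of the corollary, the coupled system \eqref{Lich}-\eqref{LW} is \emph{not} conformally covariant in general; the reduction succeeds precisely because the specific rescaling $\gs \mapsto \gq^{-10}\gs$ preserves both the trace-free and divergence-free conditions, allowing the hypotheses of Theorem \ref{PosConstThm} to be invoked on the Yamabe representative. All other hypotheses (absence of conformal Killing fields and the gradient bounds on $\gt$) are either conformally invariant or are assumed directly on the data, and so transfer without additional work.
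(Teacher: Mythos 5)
Your proposal is correct and follows essentially the same route as the paper: apply the Yamabe theorem to pass to a conformal representative with $\Scal{\gq^4\gl}=+8$, check that $\gq^{-10}\gs$ remains transverse--traceless with respect to $\gq^4\gl$ and that the remaining hypotheses transfer, invoke Theorem \ref{PosConstThm}, and reconstitute via \eqref{gammarecon}--\eqref{Krecon}. The only discrepancy --- your coefficient $\tfrac23$ versus the stated $\tfrac13$ in the pure-trace part of $K^{ab}$ --- traces back to an inconsistency between \eqref{Krecon} and the corollary as printed, not to a flaw in your argument.
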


We now turn our attention to the case of conformal data which includes metrics which lie in the zero Yamabe class (i.e., metrics which can be conformally transformed to a metric with zero scalar curvature).  We start by proving a result for metrics with $R=0$:

\begin{theorem}
\label{ZeroConstThm}
Let $\gS$ be a closed three-dimensional manifold, let $\gl$ be a smooth Riemannian metric on $\gS$  which admits no conformal Killing fields
and which has identically vanishing scalar curvature, and let $\gs$ be a smooth symmetric $2$-tensor on $\gS$ which is trace-free and divergence-free (with respect to $\gl$) and not identically zero. 
For every smooth function $\gt:\gS \to \bbR_+$ which is nowhere zero and which satisfies the gradient conditions \eqref{Z:GradTauOne} and \eqref{TauBound} and which also satisfies the gradient condition that the coefficient of 
$ \left| \Ugf{n}{} - \Ugf{n-1}{} \right|$ in equation \eqref{gradtau2} is sufficiently small,
the equations \eqref{Lich}-\eqref{LW} with data $\{ \gS; \gl, \gs, \gt \}$ admit a unique smooth solution $(\gf, W)$. Consequently for every such set of data $\{ \gS; \gl, \gs, \gt \}$, there exists a unique solution $(\Sigma; \gamma, K)$ of the constraint equations \eqref{FirstEin}-\eqref{SecondEin}, taking the form  \eqref{gammarecon}-\eqref{Krecon}.
\end{theorem}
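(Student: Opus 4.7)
The plan is to mirror the proof of Theorem \ref{PosConstThm} essentially step for step, the only structural change being that with $\Scal{\gl}\equiv 0$ the semi-decoupled system \eqref{LichIterate}-\eqref{LWIterate} loses its linear $\tfrac18 \Scal{\gl}\Ugf{n}{}$ term. I would (i) construct a constant super solution $\Dgf{\infty}{+}$ valid for every $\LW{n}$ produced by the iteration; (ii) construct a family $\{\Dgf{n}{-}\}$ of sub solutions uniformly bounded below by some $\Dgf{\infty}{-}>0$; (iii) build the sequence $\{(\Ugf{n}{},\UW{n})\}$ inductively; (iv) show that this sequence is Cauchy in $C^0(\gS)\times W^{2,p}(\gS)$, with smooth limit solving \eqref{Lich}-\eqref{LW}; and (v) verify uniqueness.

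A constant $\Dgf{\infty}{+}$ is a super solution whenever $\tfrac{1}{12}\gt^2 \UDgf{\infty}{12}{+}\geq \tfrac18|\gs+\LW{n}|^2$ holds pointwise for all $n$. Using \eqref{ptwise} together with $|\gs+\LW{n}|^2\leq 2|\gs|^2+2|\LW{n}|^2$, the $|\LW{n}|^2$ contribution can be absorbed into the left-hand side provided $\gt$ satisfies the analogue of \eqref{gradtau1}, namely
\begin{equation}\label{Z:GradTauOne}
3 C_S^2 \left(\frac{\max_{\gS}|\nabla\gt|}{\min_{\gS}\gt}\right)^2 < 1.
\end{equation}
Once this is arranged, the remaining $|\gs|^2$ inequality is solved by taking $\Dgf{\infty}{+}$ large. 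In contrast to the positive Yamabe case, the $|\gs|^2$ piece must be balanced entirely against $\tfrac{1}{12}\gt^2 \UDgf{\infty}{12}{+}$, so positivity of $\min_\gS\gt^2$ is indispensable; this is where the nowhere-vanishing hypothesis on $\gt$ first enters.

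For the sub solutions I would reuse the construction $\Dgf{n}{-}=\gb \Ugy{n}{}$ from Step 2 of Theorem \ref{PosConstThm} (itself modeled on Step 2 of Proposition \ref{P:MuLichPlus}), which is valid in any Yamabe class and yields
$$\gD \Dgf{n}{-} \geq \Dgf{n}{-} - \tfrac18|\gs+\LW{n}|^2 \UDgf{n}{-7}{-} + \tfrac{1}{12}\gt^2 \UDgf{n}{5}{-}$$
together with a uniform lower bound $\Dgf{\infty}{-}>0$ (using the gradient smallness \eqref{TauBound}). Since $\Dgf{n}{-}>0$, dropping the linear $\Dgf{n}{-}$ on the right yields the strictly stronger inequality, which is precisely the sub solution condition for the zero-curvature system. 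After enlarging $\Dgf{\infty}{+}$ if needed to guarantee $\Dgf{\infty}{-}<\Dgf{\infty}{+}$, the iterative sequence exists and satisfies the uniform two-sided bounds.

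For convergence I would replay the $\scI$-argument in Step 2 of Theorem \ref{PosConstThm} with
$$F(x,\Ugy{n}{},\Ugy{n+1}{})=-\tfrac18|\gs+\LV{n}|^2\Ugy{n+1}{-7}+\tfrac{1}{12}\gt^2\Ugy{n+1}{5}.$$
The $\scF$ operator and estimate \eqref{gradtau2} are unchanged, so its coefficient is again a small multiple of $\max_{\gS}|\nabla\gt|$; this supplies the smallness hypothesis listed in the theorem statement. The essential change is in $\scG$: its integrand
$$D_3 F = \tfrac78|\gs+\LV|^2\Ugy{n+1}{-8} + \tfrac{5}{12}\gt^2\Ugy{n+1}{4}$$
no longer inherits the trivial lower bound $\tfrac18$ that the linear $\Ugf{n+1}{}$-term supplied in the positive case; instead I would use $\tfrac{5}{12}\gt^2\Ugy{n+1}{4}\geq \tfrac{5}{12}(\min_\gS\gt^2)(\Dgf{\infty}{-})^4>0$, which is positive precisely because $\gt$ is nowhere zero. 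The maximum principle then gives \eqref{contraction} with some $\gL<1$, whence Cauchyness of $\{\Ugf{n}{}\}$ in $C^0(\gS)$, of $\{\UW{n}\}$ in $W^{2,p}(\gS)$, smoothness of the limit, and uniqueness all follow verbatim from Step 3 of Theorem \ref{PosConstThm}. The hardest point of the proof, and the sole reason the nowhere-vanishing assumption on $\gt$ is imposed here but not in the positive Yamabe case, is exactly this loss of the linear term: both the super solution balance and the positivity of $\scG$ driving the contraction must now be carried entirely by $\tfrac{1}{12}\gt^2\gf^5$, which is controllable from below only when $\min_\gS\gt^2 > 0$.
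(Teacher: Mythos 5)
Your proposal is correct and follows essentially the same route as the paper: the same constant super solution balanced by $\tfrac{1}{12}\gt^2\UDgf{\infty}{12}{+}$ under the gradient condition, the same reuse of the Yamabe-class-independent sub solutions (which satisfy the stronger inequality \eqref{YPlusSub} and hence the zero-curvature one), and the same contraction argument with the lower bound on $\scG$ now supplied by the $\tfrac{5}{12}\gt^2\Ugy{n+1}{4}$ term. Your identification of the two places where $\min_\gS\gt^2>0$ is indispensable matches the paper exactly (and your power of $\Dgf{\infty}{-}$ in the $\scG$ bound is in fact the more careful one).
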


\begin{proof}
The proof of Theorem \ref{ZeroConstThm} (Yamabe zero metrics) is very much like that of  Theorem \ref{PosConstThm} (positive Yamabe class metrics). We sketch the steps here, emphasizing the differences from the proof of Theorem \ref{PosConstThm}. 

\Step{1}{Construction of the sequence}
We make use of Proposition \ref{P:MuLichZero} to construct a sequence $\{(\Ugf{n}{}, \UW{n})\}$ satisfying
	\begin{align}\label{LichZeroIt}
	\UL{}  \Ugf{n}{} &= - \tfrac18 | \gs + \LW{n}|^2 \Ugf{n}{-7} + \tfrac{1}{12} \gt^2 \Ugf{n}{5} 
		\\
	\DNa{}{} \!\cdot\! L \UW{n}&= \tfrac23\, \Ugf{n-1}{6} \,\DNa{}{} \gt \label{LWZeroIt},
	\end{align}
and in doing so we obtain uniform upper and lower bounds for the corresponding sequence of sub and super solutions.  As in the positive curvature case, we may choose a constant $\Dgf{\infty}{+}$ which is a super solution for \eqref{LichZeroIt} for all $n$.  It suffices that the constant $\Dgf{\infty}{+}$ satisfy
	\begin{equation}\label{Z:SuperBd}
	\frac{\max_{\gS}{|\gs|^2}}{3\min_{\gS}{\gt^2}} \leq \UDgf{\infty}{12}{+},  
	\end{equation}
provided the estimate
	\begin{equation}
	|\LW{n}|^2 \leq \tfrac13 \gt^2 \UDgf{\infty}{12}{+}
	\end{equation}
also holds.  In light of the elliptic estimate
	\begin{equation}
	 |\LW{n}| \leq C_S\,\max_{\gS}{\Ugf{n-1}{6}}\,\max_{\gS}{|\nabla\gt|},
	\end{equation}
a constant $\Dgf{\infty}{+}$ satisfying \eqref{Z:SuperBd} is a super solution so long as we require that the conformal data satisfy the condition
	\begin{equation}\label{Z:GradTauOne}
	\frac{\max_{\gS}{|\nabla\gt|^2}}{\min_{\gS}{\gt^2}} \leq \frac{1}{3C_S^2}.
	\end{equation}
	
We now show that the sequence of sub solutions $\{\Dgf{n}{-}\}$ provided by the proof of Proposition \ref{P:MuLichZero} is bounded below by a positive function $\Dgf{\infty}{-}$.  Recall from the proof that the sub solutions for equation \eqref{LichZero} were in fact chosen to satisfy \eqref{RecallPlus}.  Thus the sub solutions for \eqref{LichZeroIt} in fact satisfy \eqref{YPlusSub} and hence, by the proof of Theorem \ref{PosConstThm}, are indeed uniformly bounded below by a positive constant $\Dgf{\infty}{-}$.  With $\Dgf{\infty}{-}$ in hand, we may increase $\Dgf{\infty}{+}$ if necessary to ensure that, for all $n$, the following holds
	\begin{equation}
	0 < \Dgf{\infty}{-} \leq \Dgf{n}{-} \leq \Ugf{n}{} \leq \Dgf{n}{+} \leq \Dgf{\infty}{+} < \infty.
	\end{equation}
Recall that the argument constructing $\Dgf{\infty}{-}$ places conditions on the size of $\nabla\gt$.

\Step{2}{Convergence of the sequence}
The argument for convergence of the sequence here (with $R=0$ conformal data) is very similar to that given for convergence of the sequence in the proof of Theorem \ref{PosConstThm} ($R>0$ conformal data). We define $\scJ$  
much as in \eqref{scrI} (with the $\frac{1}{8} \Ugy{n+1}{}$ term subtracted from the quantity $F$ in \eqref{F}). Then to obtain a contraction map of the form \eqref{contraction} (with $\Lambda<1$ ), we need to carry out the estimates for the operators $\scF$ and $\scG$ as in \eqref{FG}. 

The estimate for $\scF$ is precisely the same as above; we obtain \eqref{gradtau2}. For $\scG$ we easily calculate that 
	\begin{equation}
	 \scG [\Ugf{n+1}{} - \Ugf{n}{}] \geq \tfrac{5}{12}\min_{\gS}{\gt^2}\,\Dgf{\infty}{-}\, (\Ugf{n+1}{} - \Ugf{n}{}).
	\end{equation}
Combining these estimates, we readily determine that for sufficiently small max $|\nabla \tau|$, we have  \eqref{contraction} with $\Lambda<1$.  The convergence of the sequence $\{ (\Ugf{n}{}, \UW{n} )\}$ then follows,  as in the proof of Theorem  \ref{PosConstThm}.

\Step{3}{Showing that the limit is the unique solution}
The argument that the limit of the sequence $\{ (\Ugf{n}{}, \UW{n} )\}$ is a smooth solution, and that it is unique for the given set of conformal data, proceeds exactly as in the proof of Theorem  \ref{PosConstThm}.

\end{proof}

Combining this result with the Yamabe theorem for metrics of the zero Yamabe class, we produce (analogous to Corollary \ref{CorPos}) the following:

\begin{corollary} 
\label{CorZero}
Suppose $\gS$ is a three-dimensional Riemannian manifold with smooth metric $\gl$ in the Yamabe zero class having no conformal Killing vector fields.  For each smooth symmetric $2$-tensor $\gs$  which is trace-free and divergence-free with respect to $\gl$, and for each smooth function $\gt: \gS \to \bbR$ which is nowhere  zero and which satisfies the hypotheses of 
Theorem \ref{ZeroConstThm}, there  exist  smooth positive functions $\gf$ and $\gq$ and a smooth vector field $W$ such that the data
	\begin{equation}
	\begin{aligned}
	\gg_{ab} &= (\gf\gq)^4 \gl_{ab} \\
	K^{ab} &= \gf^{-10} \left( \gq^{-10}\gs + \LW{\gq^4\gl} \right)^{ab} + \tfrac13 (\gf\gq)^{-4}\gl^{ab}\gt,
	\end{aligned} 
	\end{equation}
is a solution to the Einstein constraint equations.
\end{corollary}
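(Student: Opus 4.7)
The plan is to mirror the proof of Corollary \ref{CorPos}: use the Yamabe theorem for the zero Yamabe class to replace $\gl$ with a pointwise conformal metric of vanishing scalar curvature, apply Theorem \ref{ZeroConstThm} to the correspondingly rescaled conformal data, and then undo the rescaling to recover the claimed solution of the constraint equations on $(\gS;\gl)$.

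Concretely, I would first invoke the Yamabe theorem to produce a smooth positive function $\gq$ such that $\tilde{\gl}:=\gq^4\gl$ satisfies $\Scal{\tilde\gl}\equiv 0$. I then verify that the rescaled data $(\gS;\tilde\gl, \tilde\gs, \gt)$, with $\tilde\gs := \gq^{-10}\gs$, meet the hypotheses of Theorem \ref{ZeroConstThm}. Absence of conformal Killing fields is a conformally invariant condition, so $\tilde\gl$ admits none; a standard computation using the conformal transformation law of the conformal Killing operator shows that $\tilde\gs$ is trace-free and divergence-free with respect to $\tilde\gl$; and the mean curvature function $\gt$ is unchanged, with the gradient bounds required by Theorem \ref{ZeroConstThm} assumed to hold for the rescaled data (this is the content of ``satisfies the hypotheses of Theorem \ref{ZeroConstThm}'' in the statement).

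With the hypotheses verified, Theorem \ref{ZeroConstThm} produces a unique pair $(\gf,W)$ of a smooth positive function and smooth vector field satisfying \eqref{Lich}--\eqref{LW} with data $(\gS;\tilde\gl, \tilde\gs, \gt)$. The reconstituted pair is then
\begin{align*}
\gg_{ab}&=\gf^4\tilde\gl_{ab}=(\gf\gq)^4\gl_{ab},\\
K^{ab}&=\gf^{-10}(\tilde\gs+\LW{\tilde\gl})^{ab}+\tfrac{1}{3}\gf^{-4}\gt\,\tilde\gl^{ab}=\gf^{-10}\bigl(\gq^{-10}\gs+\LW{\tilde\gl}\bigr)^{ab}+\tfrac{1}{3}(\gf\gq)^{-4}\gt\,\gl^{ab},
\end{align*}
which satisfies \eqref{FirstEin}--\eqref{SecondEin} by the reconstitution formulas \eqref{gammarecon}--\eqref{Krecon} applied to the tilded data. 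This is precisely the expression claimed in the corollary, and uniqueness of $(\gf,W)$ for fixed $\gq$ transfers to uniqueness of $(\gg,K)$.

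The main obstacle, though mild, is conformal bookkeeping: one must check cleanly that the transverse-traceless tensor $\gs$ (divergence-free with respect to $\gl$) transforms to $\gq^{-10}\gs$ which is divergence-free with respect to $\tilde\gl$, and that the conditions on $|\nabla\gt|$ appearing in Theorem \ref{ZeroConstThm} are interpreted with respect to $\tilde\gl$. No new analytic input is required beyond Theorem \ref{ZeroConstThm} and the Yamabe theorem.
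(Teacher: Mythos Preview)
Your proposal is correct and follows essentially the same route as the paper: the paper simply says the corollary follows ``analogous to Corollary~\ref{CorPos}'' by combining Theorem~\ref{ZeroConstThm} with the Yamabe theorem, which is exactly your conformal rescaling by $\gq$ followed by application of the theorem to the rescaled data $(\gS;\gq^4\gl,\gq^{-10}\gs,\gt)$. Your added bookkeeping (conformal invariance of the no-conformal-Killing condition, the transformation law for TT tensors, and the interpretation of the $|\nabla\gt|$ conditions with respect to $\gq^4\gl$) makes explicit what the paper leaves implicit.
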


\section{Conclusions}

The results we present here, together with those of the earlier papers  \cite{IM-nonCMC}, and \cite{IO},  provide a fairly complete picture of which sets of near-CMC conformal data on compact manifolds lead to solutions of the Einstein constraint equations and which do not. Similarly, the picture for near-CMC asymptotically Euclidean data \cite{CBIY-AE}  and for near-CMC asymptotically hyperbolic data \cite{IP-AH} is fairly clear as well. Even for the case of near-CMC data on manifolds with boundary, the recent results of  \cite{HKN} point toward increasing clarity.

On the other hand, almost nothing is understood about conformal data which is neither CMC nor near-CMC. This is the direction which future research into the use of the conformal method for obtaining solutions of the Einstein constraint equations is bound to explore.\footnote{We note very recent work by Holst, Nagy, and Tsogtgerel which appears to describe certain sets of conformal data which are not near-CMC and do admit solutions of the coupled Einstein constraint equations. To date, these data sets are all positive Yamabe, have small $|\gs|$, and necessarily have non vanishing (but small) matter density present.} 

\section{Acknowledgments}
We thank both the Albert Einstein Institute (Golm) and IHES (Bures) for providing a conducive research environment for the writing of portions of this work, and we thank the referee for very helpful comments.  Partial support for this work has been provided by NSF grants PHY-0354659 and PHY-0652903 to the University of Oregon.



\begin{thebibliography}{99}

\bibitem[AC]{AC-AH}
	{Andersson, Lars and Chru{\'s}ciel, Piotr}, \emph {Solutions of the constraint equations in general relativity
              satisfying ``hyperboloidal boundary conditions''},
              {Dissertationes Math. (Rozprawy Mat.)},
              {355},
              {1996}.


\bibitem[A]{A-NLbook}
	{Aubin, Thierry},
	\emph{Some nonlinear problems in {R}iemannian geometry},
	{Springer-Verlag},
	{1998}.


 
\bibitem[BI]{BI}
	{Bartnik, Robert and Isenberg, Jim},
	\emph{The constraint equations},
	{The Einstein equations and the large scale behavior of gravitational fields},
	{1--38},
	{Birkh\"auser},
	{2004}.
	
\bibitem[B]{B}Arthur Besse, \emph{Einstein Manifolds}, {Springer-Verlag}, {1980}.	

\bibitem[CB]{CB-Constraints}
	Choquet-Bruhat, Yvonne,
	\emph{Einstein constraints on compact $n$-dimensional manifolds}
	Class. and Quantum Grav.
	\textbf{21}
	2004,
	S127-S151.

\bibitem[CBIY]{CBIY-AE}
	{Choquet-Bruhat, Yvonne and Isenberg, James and York, Jr., James W.},
	\emph{Einstein constraints on asymptotically {E}uclidean manifolds},
	{Phys. Rev. D (3)},
	\textbf{61},
	{2000},
	{no. 8}.
	
\bibitem[HKN]{HKN}
         {Holst, Michael, Kommeni, K. and Nagy, Gabriel}
         \emph{Rough solutions of the Einstein constraint equations with nonconstant mean curvature}, 
         gr-qc/0708.3410.


\bibitem[IM]{IM-nonCMC}
	{Isenberg, James and Moncrief, Vincent},
	\emph{A set of nonconstant mean curvature solutions of the {E}instein constraint equations on closed manifolds},
	{Classical Quantum Gravity},
	\textbf{13},
	{1996},
	no. {7},
	{1819--1847}.


\bibitem[I]{I-CMC}
	{Isenberg, James},
	\emph{Constant mean curvature solutions of the {E}instein constraint equations on closed manifolds},
	{Classical Quantum Gravity},
	\textbf{12},
	 {1995},
	no. {9},
	{2249--2274}.
	
\bibitem[IO]{IO}
          {Isenberg, James and O'Murchadha, Niall},
          \emph{ Non-CMC conformal data sets which do not produce solutions of the Einstein constraint equations}
          {Classical Quantum Gravity}
          \textbf{21},
          {2005},
          {S233-S241}.

\bibitem[IP]{IP-AH}
	{Isenberg, James and Park, Jiseong},
	\emph{Asymptotically hyperbolic non-constant mean curvature solutions of the {E}instein constraint equations},
	{Classical Quantum Gravity},
	\textbf{14},
	{1997},
	no. {1A},
	{A189--A201}.
%
\bibitem[LP]{LP-Yamabe}
	{Lee, John M. and Parker, Thomas H.},
	\emph{The {Y}amabe problem},
	{Bull. Amer. Math. Soc. (N.S.)},
	\textbf{17},
	{1987},
	no. {1},
	{37--91}.
%
%
%
%
%
\bibitem[M]{M-Rough}
	{Maxwell, David},
	\emph{Rough solutions of the {E}instein constraint equations},
	{J. Reine Angew. Math.},
	\textbf{590},
	{2006},
	{1--29}.
%
%
%
%
\bibitem[S]{S-Scal}
	{Schoen, Richard},
	\emph{Conformal deformation of a {R}iemannian metric to constant scalar curvature},
	{J. Differential Geom.},
	\textbf{20},
	{1984},
	no. {2},
	{479--495}.











\end{thebibliography}
\end{document}